\documentclass[conference]{IEEEtran}

\usepackage{epsfig}
\usepackage{amsfonts}
\usepackage{graphicx}
\usepackage{indentfirst}

\usepackage{subfigure}
\usepackage{url}
\usepackage{algorithmic}
\usepackage{algorithm}
\usepackage[cmex10]{amsmath}
\usepackage{amsthm}
\usepackage{array}

\usepackage{fixltx2e}

\usepackage{stfloats}

\usepackage{url}

\newtheorem{thm}{Theorem}[section]


\newtheorem{lem}[thm]{Lemma}

\newtheorem{defn}[thm]{Definition}
\theoremstyle{remark}

\makeatother \pagestyle{headings}
\begin{document}
\title{Competition and Regulation in a Wireless Operators Market: An Evolutionary Game Perspective}

\author{\IEEEauthorblockN{Omer Korcak\IEEEauthorrefmark{1},
George Iosifidis\IEEEauthorrefmark{2},
Tansu Alpcan\IEEEauthorrefmark{3},
and Iordanis Koutsopoulos\IEEEauthorrefmark{2}}
\IEEEauthorblockA{\IEEEauthorrefmark{1}Dept. of Computer Engineering, Marmara University, Turkey}
\IEEEauthorblockA{\IEEEauthorrefmark{2} Dept. of Computer and Comm. Engineering, University of Thessaly and CERTH, Greece}
\IEEEauthorblockA{\IEEEauthorrefmark{3}Dept. of Electrical and Electronic Engineering, University of Melbourne, Australia}}


\maketitle

\begin{abstract}
We consider a market where a set of wireless operators compete for a large common pool of users. The latter have a reservation utility of $U_0$ units or, equivalently, an alternative option to satisfy their communication needs. The operators must satisfy these minimum requirements in order to attract the users. We model the users decisions and interaction as an evolutionary game and the competition among the operators as a non cooperative price game which is proved to be a potential game. For each set of prices selected by the operators, the evolutionary game attains a different stationary point. We show that the outcome of both games depends on the reservation utility of the users and the amount of spectrum $W$ the operators have at their disposal. We express the market welfare and the revenue of the operators as functions of these two parameters. Accordingly, we consider the scenario where a regulating agency is able to intervene and change the outcome of the market by tuning $W$ and/or $U_0$. Different regulators may have different objectives and criteria according to which they intervene. We analyze the various possible regulation methods and discuss their requirements, implications and impact on the market.
\end{abstract}

\section{Introduction}

Consider a city where $3$ commercial operators (companies) and one
municipal operator offer WiFi Internet access to the citizens
(users). The companies charge for their services and offer better
rates than the municipal WiFi service which however is given gratis.
Users with high needs will select one of the companies. However, if
they are charged with high prices, or served with low rates, a
portion of them will eventually migrate to the municipal network. In
other words, the municipal service constitutes an alternative choice
for the users and therefore sets the minimum requirements which the
commercial providers should satisfy. Apparently, the existence of
the municipal network affects both the user decisions and the
operators pricing policy. In different settings, the minimum
requirement can be an inherent characteristic of the users as for
example a lower bound on transmission rate for a particular
application, an upper bound on the price they are willing to pay or
certain combinations of both of these parameters. Again, the
operators can attract the users only if they offer more appealing
services and prices.

In this paper, we consider a general wireless communication services
market where a set of \emph{operators}, compete to sell their
services to a common large pool of \emph{users}. We assume that
users have minimum requirements or alternative options to satisfy
their needs which we model by introducing the reservation utility
$U_0$, \cite{acemoglou}. Users select an operator only if the
offered service and the charged price ensure utility higher than
$U_0$. We analyze the users strategy for selecting operator and the
price competition among the operators under this constraint. We find
that the market outcome depends on $U_0$ and on the amount of
spectrum each operator has at his disposal $W$. Accordingly, we
consider the existence of a regulating agency who is interested in
affecting the market and enforcing a more desirable outcome, by
tuning either $W$ or $U_0$. For example, consider the municipal WiFi
provider who is actually able to set $U_0$ and bias the competition
among the commercial providers. This is of crucial importance since
in many cases the competition of operators may yield inefficient
allocation of the network resources, \cite{acemoglou} or even
reduced revenue for them, \cite{walrand}. We introduce a rigorous
framework that allows us to analyze the various methods through
which the regulator can intervene and affect the market outcome
according to his objective.

\begin{figure}[t]
\begin{center}
\epsfig{figure=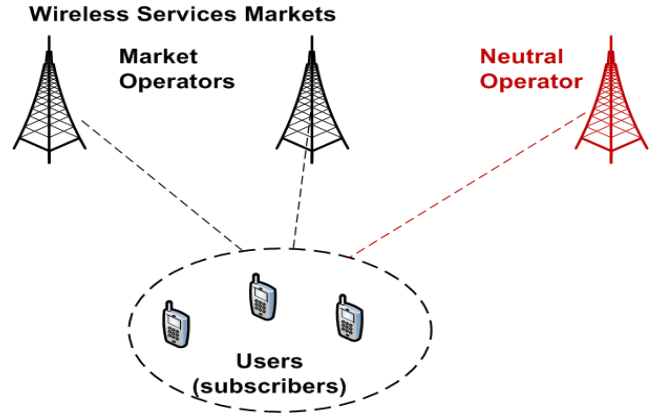, width=8.5cm,height=5.5cm}
\end{center}
\caption[An Example of Wireless Services Market]{The market consists
of a set of operators competing over a common pool of users. Each user selects one of the market operators or opts to abstain from the market and be associated with the neutral operator. The latter models the alternative out-of-the-market users option or their minimum service-price requirements.} \label{omerfig:exampleMarket}
\end{figure}

Our model captures many different settings such as a WiFi market in
a city, a mobile/cell-phone market in a country or even a secondary
spectrum market where primary users lease their spectrum to
secondary users. In order to make our study more realistic, we adopt
a macroscopic perspective and analyze the interaction of the
operators and users in a large time scale, for large population of
users, and under limited information. The operators are not aware of
the users specific needs and the latter cannot predict in advance
the exact level of service they will receive. Each operator has a
total resource at his disposal (e.g. the aggregate service rate)
which is on average equally allocated to his subscribers,
\cite{acemoglou}, \cite{berry}. This is due to the various network
management and load balancing techniques that the operators employ,
or because of the specific protocol that is used, \cite{lin}. Each
user selects the operator that will provide the optimal combination
of service quality and price. Apparently, the decision of each user
affects the utility of the other users. We model this
interdependency as an evolutionary game, \cite{sandholm1} the
stationary point of which represents the users distribution among
the operators and depends on the charged prices. This gives rise to
a non cooperative price competition game among the operators who
strive to maximize their profits.

Central to our analysis is the concept or the \emph{neutral
operator} $P_0$ which provides to the users a constant and given
utility of $U_0$ units. The $P_0$ can be a special kind of operator,
like the municipal WiFi provider in the example above, or it can
simply model the user choice to abstain from the market. This way,
we can directly calculate how many users are served by the market
and how many abstain from it and select $P_0$. Moreover, $P_0$
allows us to introduce the role of a regulating agency who can
intervene and bias the market outcome through the service $U_0$. We
show that $P_0$ can be used to increase the revenue of the operators
or the efficiency of the market. In some cases, both of these
metrics can be simultaneously improved at a cost which is incurred
by the regulator. Alternatively, the outcome of the market can be
regulated by changing the amount of spectrum each operator has at
his disposal. Different regulating methods give different results
and entail different cost for the regulator.

\begin{figure}[t]
\begin{center}
\epsfig{figure=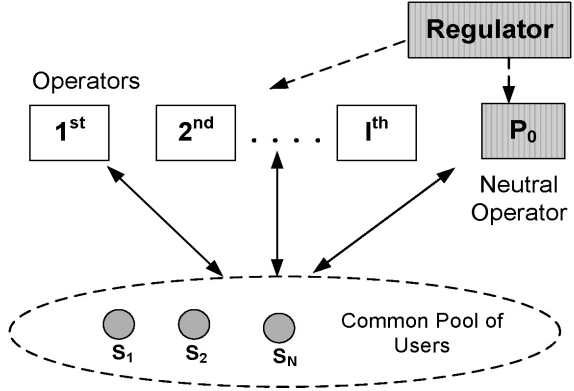, width=8cm,height=5.4cm}
\end{center}
\caption[Oligopolistic Wireless Services Market - System Model]{The
oligopoly market consists of $I$ operators and $N$ users (S). Each user is
associated with one operator at each specific time slot. Every
operator $i=1,2,\ldots,I$ can serve more than one users at a certain
time slot. The users that fail to satisfy their minimum
requirements, $U_i\leq U_0$, $\forall i\in\mathcal{I}$, abstain from
the market and select the neutral operator $P_0$.}
\label{omerfig:MarketModel}
\end{figure}

\subsection{Related Work and Contribution}

The competition of sellers for attracting buyers has been studied
extensively in the context of network economics,
\cite{courcoubetis}, \cite{srikant_economics}, both for the Internet and more
recently for wireless systems. In many cases, the competition
results in undesirable outcome. For example, in \cite{acemoglou} the
authors consider an oligopoly communication market and show that it
yields inefficient resource allocation for the users. From a
different perspective it is explained in \cite{walrand}, that
selfish pricing strategies may also decrease the revenue of the
sellers-providers. In these cases, the strategy of each node (buyer)
affects the performance of the other nodes by increasing the delay
of the services they receive, \cite{acemoglou} (\emph{effective
cost}) or, equivalently, decreasing the resource the provider
allocates to them, \cite{berry} (\emph{delivered price}). This
equal-resource sharing assumption represents many different access
schemes and protocols (TDMA, CSMA/CA, etc), \cite{lin}.

More recently, the competition of operators in wireless services
markets has been studied in \cite{berry}, \cite{huang1},
\cite{niyato1}, \cite{niyato2}, \cite{tuffin}. The users can be
charged either with a usage-based pricing scheme, \cite{huang1}, or
on a per-subscription basis, \cite{niyato1}, \cite{tuffin}. We adopt
the latter approach since it is more representative of the current
wireless communication systems. We assume that users may migrate
(churn) from one operator to the other, \cite{tuffin}, and we use
evolutionary game theory (EVGT) to model this process,
\cite{niyato-churn}. This allows us to capture many realistic
aspects and to analyze the interaction of very large population of
users under limited information. The motivation for using EVGT in
such systems is very nicely discussed in \cite{niyato2}. Due to the
existence of the \emph{neutral operator}, the user strategy is
updated through a hybrid scheme based on imitation and direct
selection of $P_0$. We define a new revision protocol to capture
this aspect and we derive the respective system dynamic equations.

Although the regulation has been discussed in context of networks,
\cite{courcoubetis}, it remains largely unexplored. Some recent work
\cite{huang-icc}, \cite{park} study how a regulator or an
\emph{intervention device} may affect a non-cooperative game among a
set of players (e.g. operators). However, these works do not
consider hierarchical systems, with large populations and limited
information. Our contribution can be summarized as follows:
\textbf{(i)} we model the wireless service market using an
evolutionary game where the users employ a new hybrid revision protocol,
based both on imitation and direct selection of a specific choice, namely the $P_0$. We derive the
differential equations that describe the evolution of this system
and find the stationary points, \textbf{(ii)} we define the price
competition game for $I$ operators and the particular case that
users have minimum requirements, or equivalently, alternative
choices/offers, \textbf{(iii)} we prove that this is a Potential
game and we analytically find the Nash equilibria, \textbf{(iv)} we
introduce the concept of the neutral operator who represents the
system/state regulator or the minimum users requirement, and
\textbf{(v)} we discuss different regulation methods and analyze
their efficacy, implications and the resources that are required for
their implementation.

The rest of this paper is organized as follows. In Section
\ref{Omersec:section2 - system model} we introduce the system model
and in Section \ref{Omersec:section3 - service market dyn} we
analyze the dynamics of the users interaction and find the
stationary point of the market. In Section \ref{Omersec:section4 -
operators_competition} we define and solve the price competition
game among the operators and in Section \ref{Omersec:section5 -
Regulation} we discuss the relation between the revenue of the
operators and the efficiency of the market and their dependency on
the system parameters. Accordingly, we analyze various regulation
methods for different regulation objectives and give related
numerical examples. We conclude in Section \ref{Omersec:section6 -
conclusions}.

\section{System Model} \label{Omersec:section2 - system model}
We consider a wireless service market (hereafter referred to as a
\emph{market}) with a very large set of users
$\mathcal{N}=(1,2,\ldots,N)$ and a set of operators
$\mathcal{I}=(1,2,\ldots,I)$, which is depicted in Figure
\ref{omerfig:MarketModel}. We assume a time slotted operation. Each user cannot be served by more than one operator simultaneously. However, users can switch in each slot $t$ between operators or even they can opt to refrain and not purchase services
from anyone of the $I$ operators. The net utility perceived by each
user who is served by operator $i$ in time slot $t$ is:
\begin{equation}
U_i(W_i, n_i(t),\lambda_i)=V_i(W_i,n_i(t))-\lambda_i
\label{omereq:utility_Ui}
\end{equation}
where $n_i(t)$ are the users served by this specific
operator in slot $t$, $W_i$ the total spectrum at his disposal, and $\lambda_i$
the charged price. In order to describe the market operation we
introduce the users vector
$\mathbf{x}(t)=(x_1(t),x_2(t),\ldots,x_I(t),x_0(t))$, where the
$i^{th}$ component $x_i(t)=n_i(t)/N$ represents the portion of
users that have selected operator $i\in\mathcal{I}$. Additionally,
with $x_0(t)=n_{0}(t)/N$ we denote the portion of users that have
selected neither of the $I$ operators. We assume that the number of users $N$ is very large, $N>>1$ and therefore the variable $x_i(t)=n_i(t)/N$ is considered continuous. In other words, we assume that there exist a continuum of users partitioned among the different operators.

\subsubsection{Valuation function}

The function $V_i(\cdot)$ represents the value of the offered service for each user associated with
operator $i\in\mathcal{I}$. Users are considered homogeneous: all the users served by a certain operator are
charged the same price and perceive the same utility. We consider the following
particular valuation function:
\begin{equation}
V_i(W_i,x_i(t))=\log{\frac{W_i}{Nx_i(t)}},\,x_{i}(t)>0 \label{omereq:log_valuation}
\end{equation}
Since $N$ is given, we use $x_i(t)$ instead of $n_i(t)$.
This function has the following desirable properties: \textbf{(i)} the valuation
for each user decreases with the total number of served users by the specific operator due to congestion, \textbf{(ii)}
increases with the amount of available spectrum $W_i$, and \textbf{(iii)} it is a concave function and
therefore captures the saturation of the user satisfaction as the allocated resource increases, i.e. it satisfies the
principle of diminishing marginal returns, \cite{courcoubetis}.

A basic assumption in our model is that users served by the same operator are allocated an equal amount of resource. We want to stress that this assumption captures many different settings in wireline, \cite{acemoglou},
or wireless networks, \cite{berry}, \cite{lin}, \cite{niyato1},
\cite{niyato2}, \cite{XuChen}. Some examples where the equal resource sharing assumption holds are the following:
\begin{itemize}

    \item \textbf{FDMA - TDMA:} If the operator uses a multiple access scheme like Frequency Division Multiple Access (FDMA) or Time Division Multiple Access (TDMA), then the equal resource sharing assumption holds by default, \cite{lin}. The users served by a certain operator receive an equal share of his total available spectrum or an equal time share of the operator's channel.

    \item \textbf{CSMA/CA:} A similar result holds for the Carrier Sense Multiple Access scheme with Collision Avoidance, \cite{bianchi}, that is used in IEEE 802.11 protocols. Users trying to access the channel receive an equal share of it and achieve - on average - the same transmission rate. Additionally, as it was shown in \cite{cagalj}, even if the radio transmitters are controlled by selfish users, they can achieve this fair resource sharing.

    \item \textbf{Random access of multiple channels:} Even in more complicated access schemes as in the case, for example, where many different users iteratively select the least congested channel among a set of available channels, it is proved that each user receives asymptotically an equal share of the channel bandwidth, \cite{XuChen}.

\end{itemize}
Additionally, the macroscopic perspective and the large time scale that we consider in this problem , ensure that spatiotemporal variations in the quality of the offered services will be smoothed out due to load balancing and other similar network management techniques that the operators employ. Therefore, users of each operator are treated in equal terms.

\subsubsection{Neutral Operator}

Variable $x_0(t)$ represents the portion of users that do not select anyone of the $I$ operators. Namely, a user
in each time slot $t$ is willing to pay operator $i\in\mathcal{I}$ only if the offered utility $U_i(W_i,x_i(t),\lambda_i)$ is
greater than a threshold $U_0\geq 0$. If all operators fail to satisfy this minimum requirement then the user abstains from the
market and is associated with the \emph{Neutral Operator} $P_0$,
Figure \ref{omerfig:MarketModel}. In other words, $P_0$ represents
the choice of selecting neither of the $I$ operators and receiving
utility of $U_0$ units. Technically, as it will be
shown in the sequel, the inclusion of $P_0$ affects both the user
decision process for selecting operator and the competition among
the operators.

From a modeling perspective, the neutral operator may be used to
represent different realistic aspects of the wireless service
market. First, $P_0$ can be an actual operator owned by the state,
as the public/municipal WiFi provider we considered in the
introductory example. In this case, through the gratis $U_0$
service, the state intervenes and regulates the market as we will
explain in Section \ref{Omersec:section5 - Regulation}.
Additionally, $U_0$ can be indirectly imposed by the state (the
regulator) through certain rules such as the minimum amount of
spectrum/rate per user. Finally, it can represent the users
reluctancy to pay very high prices for poor QoS, similarly to the
individual rationality constraint in mechanism design. We take these
realistic aspects into account and moreover, by using $x_0(t)$, we
find precisely how many users are not satisfied by the market of the
$I$ operators.

Unlike the valuation $V_i(\cdot)$ of the service offered by each operator $i\in\mathcal{I}$, $U_0$ is considered constant. When $U_0$ represents users minimum requirements or respective restrictions imposed by regulatory rules, this assumption
follows directly and actually is imperative. In case $U_0$ models the service offered by the neutral operator (e.g. the municipal WiFi network), the constant value of $U_0$ means that it is independent of the number of users and hence non-congestible. We follow this assumption for the following two reasons:\textbf{(i)} $U_0$ is a free of charge service which in general is low and hence can be ensured for a large number of users. \textbf{(ii)} The state agency (i.e. the regulator) who provides $U_0$, is able to increase his resource in order to ensure a constant value for $U_0$. As we will explain in next sections, this latter aspect captures the cost of regulation, i.e. the cost of serving users through the neutral operator. Finally, notice that our model can be easily extended for the case that $U_0$ is a congestible service.

\subsubsection{Revenue}

Each operator $i\in\mathcal{I}$ determines the price $\lambda_i\in R^{+}$ that he will charge to his clients. The decisions of the operators are realized in a different time scale than the decisions of the users. Namely, each operator $i$ determines his price in the beginning of each time epoch $\mathcal{T}$ which consists of $T$ slots, while users update their operator association decision in each slot. Let us define the price vector $\mathbf{\lambda}=(\lambda_i\,:\,i=1,2,\ldots,I)$ and the
vector of the $I-1$ prices of operators other than $i$ as
$\mathbf{\lambda}_{-i}=(\lambda_j\,:\,j\in\mathcal{I}\setminus{i})$.
We assume that $T$ is large enough so that for each price vector $\mathbf{\lambda}$ set at the beginning of an epoch, the market of the users reaches a stationary point - if such a point is attainable - during this epoch. The objective of each operator $i\in\mathcal{I}$ is to maximize his revenue during each epoch $\mathcal{T}$:
\begin{equation}
R_i(x_i(t),\lambda_i)=\lambda_ix_i(t)N
\label{omereq:Operator_Revenue}
\end{equation}
In these markets there are no service level agreements (SLAs) or any
other type of QoS guarantees and hence the operators are willing to
admit and serve as many users as it is required to achieve their goal.


\section{User Strategy and Market Dynamics} \label{Omersec:section3 - service market dyn}

\subsection{Evolutionary Game $\mathcal{G_U}$ among Users}
In order to select the optimal operator that maximizes eq.
(\ref{omereq:utility_Ui}), each user must be aware of all system
parameters, i.e. the spectrum $W_i$, the number of served users
$n_i$ and the charged price $\lambda_i$ for each $i\in\mathcal{I}$.
However, in realistic settings this information will not be
available in advance. Given these restrictions and the large number
of users, we model their interaction and the operator selection
process by defining an evolutionary game, $\mathcal{G_U}$, as
follows:
\begin{itemize}
\item Players: the set of the $N$ users, $\mathcal{N}=(1,2,\ldots,N)$.
\item Strategies: each user selects a certain operator $i\in\mathcal{I}$ or the neutral
operator $P_0$.
\item Population State: the users distribution over the $I$
operators and the neutral operator,
$\mathbf{x}(t)=(x_1(t),x_2(t),\ldots,x_{I}(t),x_{0}(t))$.
\item Payoff: the user's net utility $U_i(W_i, x_i(t),\lambda_i)$
when he selects operator $i\in\mathcal{I}$, or $U_0$ when he selects
$P_0$.
\end{itemize}

To facilitate our analysis we make the following assumptions:
\begin{itemize}
\item \textbf{Assumption 1}: The number of users $N$ is very large, $N>>1$ and therefore the variable $x_i(t)=n_i(t)/N$ is considered continuous.
\item \textbf{Assumption 2}: The initial distribution of users over the $I$ operators is non zero: $x_i(0)>0$, $\forall i\in\mathcal{I}$. It directly follows that $x_0(0)<1$.
\end{itemize}
In the sequel we explain how each user selects his strategy under this limited information and what is the outcome of this game.

\subsection{User Strategy Update}
A basic component of every evolutionary game is the \emph{revision protocol}, \cite{sandholm1}. It captures the dynamics of the interaction among the users and describes in detail the process according to which a player iteratively updates his strategy. There exist many different options for the revision protocol, depending on the modeling assumptions of the specific problem. These assumptions are mainly related to how sophisticated, informed and rational are the players. On the one extreme, fully rational and informed players update their choices according to a best response strategy like in the typical (non-evolutionary) strategic games. This means that players make a \emph{direct selection} of the best available strategy. On the other extreme, players follow an imitation strategy. In this case a player $(A)$ selects randomly another player $(B)$ and if the utility of the latter is higher, $(A)$ imitates his strategy with a probability that is proportional to the anticipated utility improvement. This modeling option is suitable for imperfectly informed players, or players with bounded rationality who update their strategy based on a better (instead of best) response strategy. Between these two extremes, there are many different options. For example, a player may update his strategy with a \emph{hybrid} protocol based partially on imitation and on direct selection, \cite{sandholm1}.

In this work, we assume that each user updates his strategy by a special type of hybrid revision protocol which is a combination of imitation of other users associated with operators from the set $\mathcal{I}$ (market operators) and direct selection of the neutral operator $P_0$. The imitation component captures the lack of information users have at their disposal about the market. On the other hand, each user is aware of the exact value of $U_0$ and hence this choice is always available through direct selection. Notice that the considered revision protocol is not a typical hybrid protocol since the direct selection is related only to the selection of $P_0$ and not to the other operators.

In detail, the proposed revision protocol can be described by the following actions that each user may take in each slot $t$:
\begin{enumerate}
\item A user associated with an operator $i\in\mathcal{I}$, selects randomly another user who is associated with an operator $j\in\mathcal{I},\,j\neq i$, and if $U_j>U_i$ imitates his strategy with a probability that is proportional to the difference $(U_j-U_i)$.
\item A user associated with the neutral operator $P_0$, selects randomly another user associated with operator $j\in\mathcal{I}$ and if $U_j>U_0$, imitates his strategy with a probability that is proportional to the difference $(U_j-U_0)$.
\item A user associated with operator $i\in\mathcal{I}$ selects the neutral operator $P_0$ with probability that is proportional to the difference $(U_0-U_i)$.
\end{enumerate}
Options $1$ and $2$, stem from the replicator dynamics introduced by Taylor and Jonker in \cite{taylor} and are based on imitation of users with better strategies. On the other hand, option $3$ is based on direct selection of better strategies, known also as pairwise dynamics, introduced by Smith in \cite{smith}.

After defining the revision protocol, we can calculate the rate at which users switch from one strategy (operator) to another strategy (operator). In particular, the switch rate of users migrating from operator $i$ to operator $j\in\mathcal{I}\setminus{i}$ in time slot $t$, is:
\begin{equation}
\rho_{ij}(t)=x_j(t)[U_j(t)-U_i(t)]_{+} \label{omereq:RevisionforOper}
\end{equation}
where $x_j(t)$ is the portion of users already associated with operator $j$. For simplicity, we express the user utilities as a function with a single argument, the time $t$.
Additionally, the users switch rate from operator $i$ to neutral operator $P_0$, is:
\begin{equation}
\rho_{i0}(t)=\gamma[U_0-U_i(t)]_{+}
\label{omereq:RevisionwithArtifOper}
\end{equation}
Notice the difference between imitation and direct selection
\cite{sandholm1}. Instead of multiplying the utilities difference with the population $x_0(t)$, we use a constant multiplier $\gamma\in R$. This is due to the model assumption that switching to the neutral operator is not accomplished through imitation and hence does not depend on the portion of users already been associated with $P_0$.  The probabilistic aspect captures the bounded rationality, the inertia of the users and other similar realistic
aspects of these markets. Finally, the switch rate of users leaving $P_0$ and returning to the market (option $2$) is:
\begin{equation}
\rho_{0i}(t)=x_i(t)[U_i(t)-U_0]_{+}
\end{equation}

Variables $\rho_{ij}$, $\rho_{i0}$ and $\rho_{0i}$ represent the rates at which users migrate from one operator to another, including the neutral operator $P_0$. It is interesting to notice that if these rates are normalized properly, they can be interpreted as the probabilities with which users update their operator selection strategy. This approach is discussed in \cite{sandholm1}. In the sequel we use these rates to derive the ordinary differential equations (ODE) that describe the evolution of the population of users.

\subsection{Market Stationary Points}
The new type of hybrid revision protocol introduced above, results in user market dynamics that cannot be expressed with the known differential equations of replicator dynamics or other similar scheme, \cite{sandholm1}. In Section \ref{appendixproof:derivofevolutionarydynamics} of the Appendix we prove that the mean dynamics of the system are:
\begin{eqnarray}
\frac{d x_i(t)}{dt}&=&x_i(t)[U_i(t)-U_{avg}(t)-x_0(t)(U_i(t)-U_0)\\ \nonumber &-&\gamma
(U_0-U_i(t))_{+}+x_0(t)(U_i(t)-U_0)_{+}],\,\forall i\in\mathcal{I}
\end{eqnarray}
where $U_{avg}(t)=\sum_{i\in\mathcal{I}}x_i(t)U_i(t)$ is the average
utility of the market in each slot $t$. The user population
associated with $P_0$ is:
\begin{equation}
\frac{d x_0(t)}{dt}=x_0 \sum_{i \in
\mathcal{I}^{+}}{x_i(U_0-U_i)} + \gamma\sum_{j \in
\mathcal{I}^{-}}{x_j(U_0-U_j)} \label{omereq:x-0-dot}
\end{equation}
where $\mathcal{I}^{+}$ is the subset of operators offering utility
$U_i(t)>U_0$, and $\mathcal{I}^{-}$ is the subset of operators
offering utility $U_i(t)<U_0$, at slot $t$.

The important thing is that despite its different
evolution, as we prove in Section \ref{appendixproof:AnalysisOfStationaryPoints},
this system has the same stationary points as the systems that are described by the classical replicator dynamic equations:
\begin{equation}
\Dot{x}_i(t)=0 \Rightarrow
x_i(t)[U_i(t)-U_{avg}(t)]=0,\,\forall i\in\mathcal{I}
\label{omereq:Pi_stationarypoints}
\end{equation}
and
\begin{equation}
\Dot{x}_0(t)=0\Rightarrow x_0(t)[U_0-U_{avg}(t)]=0
\label{omereq:P0_stationarypoints}
\end{equation}
The user state vector $\mathbf{x}^{*}$ and the respective user
utility $U_{i}^{*}$, $i\in\mathcal{I}$, that satisfy these
stationary conditions can be summarized in the following $3$ cases:
\begin{itemize}
\item \textbf{Case A}: $x_{i}^{*}$, $x_{0}^{*}>0$ and
$U_{i}^{*}=U_0$, $i\in\mathcal{I}$.
\item \textbf{Case B}: $x_{i}^{*}$, $x_{j}^{*}>0$, $x_{0}^{*}=0$ and
$U_{i}^{*}=U_{j}^{*}$, with $\,$
$U_{i}^{*},\,U_{j}^{*}>U_0$,$\,\forall\,i,\,j\in\mathcal{I}$.
\item \textbf{Case C}: $x_{i}^{*}$, $x_{j}^{*}>0$, $x_{0}^{*}=0$ and
$U_{i}^{*}=U_{j}^{*}=U_0$, $\,\forall\,i,\,j\in\mathcal{I}$.
\end{itemize}
Case $A$ corresponds to the scenario where all operators offer to their clients net utility which is equal to the value of the service offered by the neutral operator. On the other hand, in case $B$ the market operators offer higher utility than the neutral operator and hence all users are served by the market. Finally, in case $C$, the $I$ operators offer marginal services, i.e. equal to $U_0$, but they have attracted all the users.

It is interesting to compare the above results with the Wardrop model and the Wardrop equilibrium, \cite{wardrop}. The market stationary points for \textbf{Case A} and \textbf{Case C} satisfy the \emph{Wardrop first principle} and yield an equilibrium where the available strategy options ("operators" in our problem) result in equal utility for the players ("users"). However, this does not hold for \textbf{Case B} where operators other than $P_0$ offer higher utility. This emerges due to the fact that the alternative option (or reservation utility) is non-congestible, i.e. independent of $x_0$. The evolutionary game allows us to provide a richer model than the typical Wardrop model and more importantly to capture the users interaction and dynamics.

Before calculating the stationary point $\mathbf{x}^{*}$ for each
case, and in order to facilitate our analysis, we define the
scalar parameter $\alpha_i=W_i/(Ne^{U_0})$ for each operator $i\in\mathcal{I}$ and the respective vector
$\mathbf{\alpha}=(\alpha_i\,:\,i=1,2,\ldots,I)$. As it will be
explained in the sequel, these parameters determine the operators
and users interaction and will help us to explain the role of the regulator. We can find the stationary points for \textbf{Case A} by
using equation $U_{i}(W_i,x_{i}^{*},\lambda_i)=U_0$ and imposing
the constraint $x_{0}^{*}>0$. Apparently, the state vector
$\mathbf{x}^{*}$ depends on the price vector $\mathbf{\lambda}$.
Therefore, we define the set of all possible \textbf{Case A}
stationary points, $X_A$, as follows (see Section \ref{appendixproof:AnalysisOfStationaryPoints} for
details):
\begin{equation}
X_A=\left\{x_{i}^{*}=\alpha_ie^{-\lambda_i},\forall
i\in\mathcal{I},
x_{0}^{*}=1-\sum_{i=1}^{I}\alpha_ie^{-\lambda_i}:\,\mathbf{\lambda}\in
\Lambda_A\right\}\nonumber
\end{equation}
where $\Lambda_A$ is the set of prices for which a stationary point
in $X_A$ is attainable, i.e. for which it holds $x_{0}^{*}>0$:
\begin{equation}
\Lambda_A=\left\{
(\lambda_1,\lambda_2,\ldots,\lambda_I)\,:\,\sum_{i=1}^{I}\alpha_ie^{-\lambda_i}<1
\right\} \nonumber
\end{equation}
Recall that due to the very large number of users, we consider $x_i$ a continuous variable.

Similarly, for \textbf{Case B}, we calculate the stationary points
by using the set of equations
$U_{i}(W_i,x_{i}^{*},\lambda_i)=U_{j}(W_j,x_{j}^{*},\lambda_j)$,
$\forall\,i,j\in\mathcal{I}$:
\begin{equation}
X_B=\left\{x_{i}^{*}=\frac{\alpha_i}{e^{\lambda_i}\sum_{j=1}^{I}\alpha_je^{-\lambda_j}},\forall
i\in\mathcal{I},x_{0}^{*}=0:\,\mathbf{\lambda}\in \Lambda_B\right\}\nonumber
\end{equation}
where $\Lambda_B$ is the set of prices for which a stationary point
in $X_B$ is feasible, i.e. $U_{i}^{*}>U_0$:
\begin{equation}
\Lambda_B=\left\{
(\lambda_1,\lambda_2,\ldots,\lambda_I)\,:\,\sum_{i=1}^{I}\alpha_ie^{-\lambda_i}>1
\right\}\nonumber
\end{equation}
Finally, the stationary points for the \textbf{Case C} solution must
satisfy the constraint $\sum_{i=1}^{I}\alpha_ie^{-\lambda_i}=1$
which yields:
\begin{equation}
X_C=\left\{x_{i}^{*}=\alpha_ie^{-\lambda_i},\forall\,i\in\mathcal{I},x_{0}^{*}=0:\,\mathbf{\lambda}\in
\Lambda_C\right\}\nonumber
\end{equation}
with
\begin{equation}
\Lambda_C=\left\{
(\lambda_1,\lambda_2,\ldots,\lambda_I)\,:\,\sum_{i=1}^{I}\alpha_ie^{-\lambda_i}=1
\right\}\nonumber
\end{equation}
Notice that the stationary point sets $X_A$, $X_B$ and $X_C$ and the
respective price sets, $\Lambda_A$, $\Lambda_B$, and $\Lambda_C$
depend on the vector $\mathbf{\alpha}$. These results are summarized
in Table \ref{omertable:statpoints}. For each operators price profile $\mathbf{\lambda}$, the evolutionary game admits a unique stationary point $\mathbf{x}^{*}=(x_{1}^{*},x_{2}^{*},\ldots,x_{I}^{*},x_{0}^{*})$ which belongs in the respective set $X_A$, $X_B$, or $X_C$. The utility of the users is
equal to $U_0$ for the \textbf{Case A} and \textbf{Case C}, while
for \textbf{Case B} it depends on $\mathbf{\lambda}$.

\begin{table}[t]
\caption[Wireless Service Market Stationary Points]{Wireless Service Market Stationary Points.} 
\centering 
\begin{tabular}{c c c c} 
\hline\hline 
  & $X_A$ & $X_B$ & $X_C$  \\ [0.9ex] 
\hline 
$x_{i}^{*}$ & $\alpha_ie^{-\lambda_i}$ &$\frac{\alpha_i}{e^{\lambda_i}\sum_{j=1}^{I}\alpha_je^{-\lambda_j}}$ & $\alpha_ie^{-\lambda_i}$  \\ %
$x_{0}^{*}$ & $1-\sum_{i=1}^{I}\alpha_ie^{-\lambda_i}$ & $0$ & $0$  \\%
Cond. & $\mathbf{\lambda}\in\Lambda_A$ & $\mathbf{\lambda}\in\Lambda_B$ & $\mathbf{\lambda}\in\Lambda_C$ \\%
\hline
\end{tabular}
\label{omertable:statpoints} 
\end{table}

\subsubsection{Stability of Stationary Points}

Now that we found the stationary points of the hybrid revision protocol, it is important to characterize their stability. We prove in the sequel that these points are Evolutionary Stable Strategies (ESS) and hence they are locally asymptotically stable, i.e. stable within a limited region. ESS and replicator dynamics are the two concepts used for studying evolutionary games. Unlike the replicator dynamics, ESS is a static concept which requires that the strategy of players in the equilibrium is stable when it is invaded by a small population of players playing a different strategy, \cite{levin-learning}. When the players population is homogeneous, as we assumed in our model, an ESS is stable in the replicator dynamic, but not every stable steady state is an ESS. Additionally, every ESS is Nash, and hence ESS is a refinement of the Nash equilibrium.

Let us first give a simple definition of the ESS, tailored to our system model. Assume that the users market has reached the stationary state described by vector $\mathbf{x}^{*}$. Suppose now that a small portion $\epsilon>0$ of the users population deviates from their decision in the stationary state (i.e. selects another operator) and selects another operator $j\in\mathcal{I}$ or the neutral operator. This yields a new distribution of users which we denote by $\mathbf{x}_{\epsilon}=(x_{1}^{\epsilon},x_{2}^{\epsilon},\ldots,x_{I}^{\epsilon},x_{0}^{\epsilon})$. We say that $\mathbf{x}^{*}$ is an ESS if (i) users that deviate from $\mathbf{x}^{*}$ receive lower utility in the new system state $\mathbf{x}_{\epsilon}$ or, (ii) the utility of the deviating users in $\mathbf{x}_{\epsilon}$ is the same as in the previous state $\mathbf{x}^{*}$, but the utility of the legitimate users (those insisting in their initial decisions) is higher in $\mathbf{x}_{\epsilon}$ than in $\mathbf{x}^{*}$. In both cases, the deviating users worsen their obtained utility. The stationary points derived above satisfy these conditions and hence they are ESS.

In detail, assume that the system has a stationary point $\mathbf{x}^{*}=(x_{1}^{*},x_{2}^{*},\ldots,x_{I}^{*},x_{0}^{*})\in X_{B}$, with $x_{0}^{*}=0$. Suppose that a user who is associated with operator $i\in\mathcal{I}$ deviates and selects another operator $j\in\mathcal{I}$. In this case, the population of users in operator $i$ decreases, $x_{i}^{\epsilon}<x_{i}^{*}$ and the population of users in operator $j$ increases, $x_{j}^{\epsilon}>x_{j}^{*}$. Initially, these two operators offered identical utility, $U_{i}(W_i,x_{i}^{*},\lambda_i)=U_{j}(W_j,x_{j}^{*},\lambda_j)$ but after the decision of the deviating user it becomes $U_{i}(W_i,x_{i}^{\epsilon},\lambda_i)>U_{j}(W_j,x_{j}^{\epsilon},\lambda_j)$. Clearly, the deviating user obtains less utility and hence there is no incentive to deviate. Similarly, if a user deviates and selects the neutral operator, he will receive reduced utility since when $\mathbf{x}^{*}\in X_B$, it is $U_{i}^{*}>U_{0},\,\forall i\in\mathcal{I}$.

Assume now that the system attains a stationary point $\mathbf{x}^{*}\in X_A$. Similarly to the previous analysis, it is straightforward that a user who deviates from $\mathbf{x}^{*}$ and moves from an operator $i\in\mathcal{I}$ to another operator $j\in\mathcal{I}$ will decrease his utility. If the user migrates to the neutral operator, his utility will not be reduced because $U_{0}$ is constant (non-congestible). However, in this case, the users that will insist in their initial choice of operator $i$ will now receive higher utility due to the move of the deviator. Due to the ESS definition and specifically according to Smith's second condition, \cite{smith-book}, this is not a preferable choice for the deviator and hence $\mathbf{x}^{*}\in X_B$ is an ESS.

Finally, when $\mathbf{x}^{*}\in X_C$, user deviation from a market operator $i\in\mathcal{I}$ to another market operator $j\in\mathcal{I}$ or to $P_0$ is not beneficial for the deviator, either because it decreases his utility or because it increases the utility of other users. In conclusion, the stationary points of the proposed revision protocol are ESS equilibriums and hence locally stable.

\section{Price Competition Among Operators} \label{Omersec:section4 - operators_competition}

In the previous section we analyzed the stationary points of
users interaction and showed that they depend on the prices selected
by the operators. Each operator anticipates the users strategy and
chooses accordingly for each epoch $\mathcal{T}$ the price that maximizes his revenue. This gives
rise to a non-cooperative price competition game $\mathcal{G_P}$ among the operators that is played in the beginning of each time epoch $\mathcal{T}$. We assume that operators are aware of the parameters of the users market and also know the values of parameters $\alpha_i,\,i\in\mathcal{I}$ and $U_0$. Specifically, we model the operators competition as a static simultaneous move normal form game of complete information, following the Bertrand competition model \cite{courcoubetis}. We are interested not only in finding the Nash equilibriums (NE) of this game but also to understand if and how the game converges to them.

We prove that $\mathcal{G_P}$ is a potential game and hence if it is played in many rounds and operators choose their prices based on the previous prices of the other operators, the game converges to a NE. In other words, we analyze the dynamics induced by the repeated play of the same game assuming that operators follow simple myopic rules.  We show that the equilibrium of the competition game depends on vector $\mathbf{\alpha}$ and the value of $U_0$. For certain combinations of these parameters, the game admits a unique equilibrium while for other combinations, it reaches one of the infinitely many equilibriums depending on the initial prices.

\subsection{Price Competition Game $\mathcal{G_P}$}
Before analyzing this game, it is important to emphasize that the revenue function depends on the price
vector $\mathbf{\lambda}$. In particular, using
equation (\ref{omereq:Operator_Revenue}), we can calculate the
revenue of operator $i$ when $\mathbf{\lambda}\in\Lambda_A$, when
$\mathbf{\lambda}\in\Lambda_B$, and when $\mathbf{\lambda}\in\Lambda_C$, denoted as $R_{i}^{A}(\cdot)$, $R_{i}^{B}(\cdot)$ and $R_{i}^{C}(\cdot)$ respectively:
\begin{equation}
R_{i}^{A}(\lambda_i)=R_{i}^{C}(\lambda_i)=\alpha_i\lambda_iNe^{-\lambda_i}
\end{equation}
and
\begin{equation}
R_{i}^{B}(\lambda_i,\lambda_{-i})=\frac{\alpha_i\lambda_iN}{e^{\lambda_i}\sum_{i=1}^{I}\alpha_ie^{-\lambda_i}}
\end{equation}
$R_{i}^{A}(\cdot)$ and $R_{i}^{C}(\cdot)$ depend only on the price
selected by operator $i$, while $R_{i}^{B}(\cdot)$ depends on the
entire price vector $\mathbf{\lambda}$. However, in all cases, the
price set ($\Lambda_A$, $\Lambda_B$ or $\Lambda_C$) to which the
price vector $\mathbf{\lambda}=(\lambda_i,\lambda_{-i})$ belongs, is
determined jointly by all the $I$ operators.

Let us now define the non-cooperative \textbf{Pricing Game} among the $I$ operators,
$\mathcal{G_P}=(\mathcal{I},\{\lambda_i\},\{R_i\})$:\\%
\begin{itemize}
\item The set of \emph{Players} is the set of the $I$ operators $\mathcal{I}=(1,2,\ldots,I)$.
\item The \emph{strategy space} of each player $i$ is its price
$\lambda_i\in[0,\lambda_{max}]$, $\lambda_{max}\in\mathcal{R}^{+}$,
and the strategy profile is the price vector
$\mathbf{\lambda}=(\lambda_1,\lambda_2,\ldots,\lambda_I)$ of the
operators.
\item The payoff function of each player is his
revenue $R_i: (\lambda_i,\mathbf{\lambda}_{-i})\rightarrow
\mathcal{R}$, where $R_i=R_{i}^{A}$ or $R_{i}^{B}$ or $R_{i}^{C}$.
\end{itemize}
The particular characteristic of this game is that each operator has
$2$ different payoff functions depending on the price profile.
Despite this characteristic, the payoff function is continuous
and quasi-concave as we prove in the Appendix, Section \ref{appendixproof:PropertOfTheRevenFunction}.
In the sequel, we analyze the best response of each operator which constitutes a reaction curve to the prices set by the other operators. The equilibrium of the game $\mathcal{G_P}$ is the intersection of the reaction curves of the operators.

\subsection{Best Response Strategy of Operators}
The best response of each operator $i$, $\lambda_{i}^{*}$, to the
prices selected by the other $I-1$ operators,
$\mathbf{\lambda}_{-i}$, depends on the users market stationary
point. Notice that for certain $\mathbf{\lambda}_{-i}$, operator $i$
may be able to select a price such that $(\lambda_i,\lambda_{-i})$
belongs to any price set ($\Lambda_A$, $\Lambda_B$ or $\Lambda_C$)
while for some $\mathbf{\lambda}_{-i}$ the operator choice will be
restricted in two or even a single price set.

\textbf{Best Response when $\mathbf{\lambda}\in\Lambda_A$:} If the
$I-1$ operators $j\in\mathcal{I}\setminus{i}$ select such prices,
$\mathbf{\lambda}_{-i}$, that the market stationary point is $\mathbf{x}^{*}\in X_A$, then operator $i$ finds the price $\lambda_{i}^{*}$ that maximizes his revenue $R_{i}^{A}(\cdot)$ by
solving the following constrained optimization problem
($\mathbf{P}_{i}^{A}$):
\begin{equation}
\max_{\lambda_i\geq 0}\alpha_i\lambda_iNe^{-\lambda_i}
\end{equation}
s.t.
\begin{equation}
\sum_{j=1}^{I}\alpha_je^{-\lambda_j}<1\label{omereq:priceconstraintA}
\end{equation}
The objective function of this problem is quasi-concave, \cite{boyd}. However, the constraint defines an open set and hence uniqueness of optimal solution is not ensured. To overcome this obstacle we substitute constraint eq. (\ref{omereq:priceconstraintA}) with the closed set:
\begin{equation}
\lambda_{i}\geq \log\frac{\alpha_i}{1-\sum_{j\neq i}\alpha_je^{-\lambda_j}} +\epsilon
\end{equation}
where $\epsilon>0$ is an arbitrary small constant number. This inequality stems from eq. (\ref{omereq:priceconstraintA}) by solving for $\lambda_i$ and adding $\epsilon$. It does not affect the problem definition and formulation nor the obtained results since, as we will prove in the sequel, operators do not select a price in the lower bound of the constraint set. After this transformation the problem has a unique optimal solution which is equal to the solution of the respective unconstrained problem, $\lambda_{i}^{*}=1$, if $(1,\lambda_{-i})\in\Lambda_A$.

\textbf{Best Response when $\mathbf{\lambda}\in\Lambda_B$:}
Similarly, when $\mathbf{\lambda}_{-i}$ is such that operator $i$
can select a price $\lambda_{i}^{*}$  with
$(\lambda_{i}^{*},\mathbf{\lambda}_{-i})\in\Lambda_B$, then his
revenue is given by the function $R_{i}^{B}(\cdot)$ and is maximized
by the solution of problem ($\mathbf{P}_{i}^{B}$):
\begin{equation}
\max_{\lambda_i\geq 0} \frac{\lambda_i\alpha_i
N}{e^{\lambda_i}\sum_{j
\in\mathcal{I}}\alpha_je^{-\lambda_j}}
\end{equation}
s.t.
\begin{equation}
\sum_{j=1}^{I}\alpha_je^{-\lambda_j}>1 \label{omereq:const-P2B-1}
\end{equation}
This is also a concave problem which would have a unique solution if the constraint set was closed and compact. Again, we substitute
the constraint with the (almost) equivalent inequality:
\begin{equation}
\lambda_{i}\leq \log\frac{\alpha_i}{1-\sum_{j\neq
i}\alpha_je^{-\lambda_j}}-\epsilon
\end{equation}
Now, the problem has a unique solution which coincides with the solution of the respective unconstrained problem, denoted $\mu_{i}^{*}$, if
$(\mu_{i}^{*},\lambda_{-i})\in\Lambda_B$ as we explain in detail in Section \ref{appendixproof:BRPricing}.

\textbf{Best Response when $\mathbf{\lambda}\in\Lambda_C$:} In this
special case, the price of each operator $i$ is directly determined
by the prices that the other operators have selected. Namely, given the vector $\mathbf{\lambda}_{-i}$, each operator $i$ has only one feasible solution (otherwise $\mathbf{\lambda}$ does not belong to $\Lambda_C$):
\begin{equation}
\lambda_{i}^{*}=\log\frac{\alpha_i}{1-\sum_{j\neq
i}\alpha_je^{-\lambda_j}} \label{omereq:priceareaC}
\end{equation}
Whether each operator $i$ will agree and adopt this price or not,
depends on the respective accrued revenue $R_{i}^{C}(\lambda_{i}^{*},\lambda_{-i})$.

We can summarize the best response price strategy of each operator
$i\in\mathcal{I}$, by defining his revenue function as follows:
\begin{equation}
R_{i}(\lambda_i,\mathbf{\lambda}_{-i},\mathbf{\alpha})=
\begin{cases}
\frac{\alpha_i\lambda_iN}{\sum_{j=1}^{I}\alpha_je^{\lambda_i-\lambda_j}} & \text{if $\lambda_i<l_0$},\\
\alpha_i\lambda_iNe^{-\lambda_i} & \text{if $\lambda_i\geq
l_0$}.
\end{cases}
\end{equation}
where $l_0=\log(\alpha_i/(1-\sum_{j\neq i}\alpha_je^{-\lambda_j}))$.
Clearly, the optimal price $\lambda_{i}^{*}$ depends both on the
prices of the other operators $\mathbf{\lambda}_{-i}$ and on
parameters $\alpha_i$, $i=1,2,\ldots,I$:
\begin{equation}
\lambda_{i}^{*}=arg\max_{\lambda_i}{R_{i}(\lambda_i,\mathbf{\lambda}_{-i},\mathbf{\alpha})}
\label{omereq:bestresponse}
\end{equation}
where $\mathbf{\alpha}=(\alpha_1,\alpha_2,\ldots,\alpha_I)$. Clearly, each operator needs to know the vector $\mathbf{\alpha}$ and to be able to observe the other operators prices in order to calculate his best response.

For each possible price vector $\mathbf{\lambda}_{-i}$ of the $\mathcal{I}\setminus{i}$ operators,
operator $i$ will solve all the above optimization problems and find
the solution that yields the highest revenue. In Section \ref{appendixproof:BRPricing} we
prove that this results in the following best response strategy:
\begin{equation}
\lambda_{i}^{*}(\mathbf{\lambda}_{-i},\mathbf{\alpha})=
\begin{cases}
1 & \text{if $(1,\mathbf{\lambda}_{-i})\in\Lambda_A$},\\
\mu_{i}^{*} & \text{if
$(\mu_{i}^{*},\mathbf{\lambda}_{-i})\in\Lambda_B$},\\
l_0 & \text{otherwise}.
\end{cases}\label{omereq:bestresponse-cases}
\end{equation}
These options are mutually exclusive. Moreover, if $\sum_{j\neq i}\alpha_j/e^{\lambda_j} \geq 1$,
the only feasible response is $\lambda_{i}^{*}=\mu_{i}^{*}$. The
dependence of $\lambda_{i}^{*}$ on parameters
$\alpha_i=W_i/(Ne^{U_0})$, $i=1,2,\ldots,I$, has interesting implications and brings into the fore the role of the regulator. Finally, observe that the transformation of the constraint set of problems $\mathbf{P}_{i}^{A}$ and $\mathbf{P}_{i}^{B}$ did not affect the best response strategy of operator $i$ since he only selects the solution of the respective unconstrained problems.

\subsection{Equilibrium Analysis of $\mathcal{G_P}$ }
The price competition game $\mathcal{G_P}$ is a finite ordinal
potential game and therefore not only has pure Nash equilibria but
also the players can reach them under any best response strategy. That is, if we consider that $\mathcal{G_P}$ is played repeatedly by the operators who update their strategy with a myopic best response method, we can show that the convergence to the equilibriums is ensured under any finite improvement path (FIP), \cite{shapley}. The potential function is:
\begin{equation}
\mathcal{P}(\mathbf{\lambda})=
\begin{cases}
\sum_{j=1}^{I}[\log{\lambda_j}-\lambda_j],\,\,\, \text{if} \sum_{j=1}^{I}\alpha_je^{-\lambda_j}\leq1,\\
\sum_{j=1}^{I}[\log{\lambda_j}-\lambda_j]-\log{(\sum_{j=1}^{I}\alpha_je^{-\lambda_j})},\,\text{else.}
\end{cases} \nonumber
\end{equation}
The detailed proof is given in Section \ref{appendixproof:ExistAndConverg}. In order to find the NE
we solve the system of equations (\ref{omereq:bestresponse-cases}),
$i=1,2,\ldots,I$ and specifically we use the iterated dominance method (Section \ref{appendixproof:DetailedAnalysisOfEquil}).

The outcome of the game $\mathcal{G_U}$ affects the strategy of operators and therefore the outcome of the game $\mathcal{G_P}$. A
price vector $(\lambda_{i}^{*},\mathbf{\lambda}_{-i}^{*})$ is an
equilibrium of the game $\mathcal{G_P}$, parameterized by the vector
$\mathbf{\alpha}=(\alpha_1,\alpha_2,\ldots,\alpha_I)$, if it satisfies:
\begin{equation}
R_i(\lambda_{i}^{*},\mathbf{\lambda}_{-i}^{*},\mathbf{\alpha})\geq
R_i(\lambda_{i},\mathbf{\lambda}_{-i}^{*},\mathbf{\alpha}),
\forall\, i\in\mathcal{I},\,\, \forall \lambda_{i}\geq
0,\,\forall\,\mathbf{x}^{*}\nonumber
\end{equation}
In order to simplify our study and focus on the results and
implications of our analysis, we assume that all operators have the
same amount of available spectrum $W_i=W$ and therefore it is also $\alpha_i=\alpha$, $\forall i\in\mathcal{I}$.

The equilibrium of the price competition game and subsequently the market stationary point $\mathbf{x}^{*}$, depend
on the value of $\alpha$. These results are summarized in Table
\ref{omertable:equilibrium} and stem from the following Theorem:
\begin{thm}
The non-cooperative game $\mathcal{G_P}$ where operators select their strategy in order to maximize their revenue, converges to one of the following pure Nash equilibria:
\begin{itemize}
\item If $\alpha\in A_1=(0,e/I)$, there is a unique Nash Equilibrium
$\mathbf{\lambda}^{*}\in\Lambda_A$, with
$\mathbf{\lambda}^{*}=(\lambda_{i}^{*}=1:\,i=1,2,\ldots,I)$ and
the respective unique market stationary point is $\mathbf{x}^{*}\in X_A$.
\item If $\alpha\in A_3=(e^{\frac{I}{I-1}}/I,\infty)$, there is a unique Nash
Equilibrium $\mathbf{\lambda}^{*}\in\Lambda_B$, with
$\mathbf{\lambda}^{*}=(\lambda_{i}^{*}=\frac{I}{I-1}:\,i=1,2,\ldots,I)$, which induces a unique respective market stationary point
$\mathbf{x}^{*}\in X_B$.
\item If $\alpha\in A_2=[e/I, e^{\frac{I}{I-1}}/I]$, there exist infinitely many
equilibria, $\mathbf{\lambda}^{*}\in\Lambda_C$, and each one of
them yields a respective market stationary point $\mathbf{x}^{*}\in
X_C$.
\end{itemize}
\end{thm}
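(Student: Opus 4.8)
The plan is to reduce the theorem to locating the fixed points of the best-response correspondence (\ref{omereq:bestresponse-cases}) and to borrow the convergence statement from the potential-game structure. Since Section \ref{appendixproof:ExistAndConverg} establishes that $\mathcal{G_P}$ admits the potential $\mathcal{P}$, pure Nash equilibria exist and every finite improvement path reaches one; in fact a short computation shows that $\partial\mathcal{P}/\partial\lambda_i=\partial(\log R_i)/\partial\lambda_i$ on both branches, so $\mathcal{P}$ is an exact potential for the transformed payoffs $\log R_i$. Hence the ``converges to'' clause is immediate and it remains only to identify and count the equilibria. Exploiting the symmetry $\alpha_i=\alpha$, I would first record the two symmetric candidates: in region $A$ the payoff $R_i^A=\alpha\lambda_iNe^{-\lambda_i}$ is independent of $\mathbf{\lambda}_{-i}$ and peaks at $\lambda_i=1$, while in region $B$ the first-order condition for $R_i^B$ at a symmetric profile collapses to $(I-1)(\mu^{*}-1)=1$, i.e. $\mu^{*}=I/(I-1)$. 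A feasibility check then gives $(1,\dots,1)\in\Lambda_A\iff I\alpha e^{-1}<1\iff\alpha<e/I$ and $(\mu^{*},\dots,\mu^{*})\in\Lambda_B\iff I\alpha e^{-I/(I-1)}>1\iff\alpha>e^{I/(I-1)}/I$, which are exactly the defining conditions of $A_1$ and $A_3$.

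Next I would show these candidates are the \emph{only} equilibria on their ranges by excluding the other regions. For $\alpha\in A_1$, any region-$B$ equilibrium would place each operator at its region-$B$ optimum, which satisfies $\mu_i^{*}>1$, so that $\sum_j\alpha e^{-\mu_j^{*}}<I\alpha e^{-1}<1$, contradicting membership in $\Lambda_B$; a $\Lambda_C$ equilibrium is excluded by the boundary analysis below. Since the region-$A$ best response equals $1$ regardless of the rivals, the surviving equilibrium is the unique profile $(1,\dots,1)$. For $\alpha\in A_3$ the region-$A$ point is infeasible (because $\alpha>e/I$) and $\Lambda_C$ is again excluded, so all equilibria lie in $\Lambda_B$; uniqueness there follows from strict concavity of $\mathcal{P}$ on $\Lambda_B$ (each $\log\lambda_j-\lambda_j$ is strictly concave and $-\log\sum_j\alpha e^{-\lambda_j}$ is concave by log-sum-exp convexity), whose unique critical point is the symmetric $\mu^{*}=I/(I-1)$.

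For $\alpha\in A_2$ both symmetric candidates are infeasible -- the region-$A$ point violates $\sum_j\alpha e^{-\lambda_j}<1$ and the region-$B$ point violates $\sum_j\alpha e^{-\lambda_j}>1$ -- so every equilibrium must lie on the boundary $\Lambda_C=\{\sum_j\alpha e^{-\lambda_j}=1\}$. The crux is to characterize which boundary profiles are Nash. Because $R_i^A$, $R_i^B$ and $R_i^C$ coincide on the boundary, revenue is continuous across it, and I would test the two one-sided deviations: raising $\lambda_i$ pushes operator $i$ into $\Lambda_A$ and is unprofitable iff $\lambda_i\ge1$ (as $\lambda e^{-\lambda}$ decreases past $1$), while lowering $\lambda_i$ pushes it into $\Lambda_B$ and is unprofitable iff $\lambda_i\le\mu_i^{*}$. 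The Nash profiles in $\Lambda_C$ are therefore those on the $(I-1)$-dimensional surface $\sum_j\alpha e^{-\lambda_j}=1$ obeying $1\le\lambda_i\le\mu_i^{*}$ for all $i$. At the symmetric point $\lambda_i=\log(I\alpha)$ these reduce to $1\le\log(I\alpha)\le I/(I-1)$, i.e. $\alpha\in A_2$; and when $\alpha$ lies strictly inside $A_2$ the inequalities hold strictly, so a whole neighbourhood on the surface also satisfies them, producing the claimed continuum of equilibria, each inducing a stationary point in $X_C$.

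The step I expect to be the main obstacle is this boundary analysis for $A_2$: making the one-sided deviation arguments rigorous with the rival-dependent optimum $\mu_i^{*}$, controlling the endpoints of the interval (where the continuum may degenerate to a single point), and verifying that the conditions $1\le\lambda_i\le\mu_i^{*}$ with $\sum_j\alpha e^{-\lambda_j}=1$ are \emph{jointly} infeasible when $\alpha\notin A_2$ -- which is precisely what makes uniqueness in $A_1,A_3$ consistent with the continuum in $A_2$. By contrast, the region-exclusion inequalities and the concavity-based uniqueness in $\Lambda_B$ are routine; the delicate bookkeeping is the interplay of the three revenue branches at the $\Lambda_C$ interface, which the iterated-dominance argument of Section \ref{appendixproof:DetailedAnalysisOfEquil} is designed to organize.
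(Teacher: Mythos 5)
Your proposal is correct and follows the paper's overall skeleton---potential-game structure for the ``converges to'' clause, then a region-by-region identification of the equilibria---but the identification step runs along a genuinely different route. The paper (Appendix, Lemmas \ref{lem:lemma4}--\ref{lem:lemma8}) organizes everything as iterated strict dominance through the log-harmonic-mean variables $h$ and $h_{-i}$, with the bracketing lemma that $\mu_i^{*}$ always lies between $I/(I-1)$ and $h_{-i}$ doing the heavy lifting. You instead (i) verify feasibility of the two symmetric candidates directly, (ii) obtain uniqueness in $\Lambda_B$ from strict concavity of the region-$B$ branch of $\mathcal{P}$, and (iii) characterize the $\Lambda_C$ equilibria by the two one-sided deviation conditions $1\le\lambda_i\le\mu_i^{*}$ on the surface $\sum_j\alpha e^{-\lambda_j}=1$, which compresses the paper's Lemmas \ref{lem:lemma1}, \ref{lem:lemma2} and \ref{lem:lemma8} into one statement. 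Two remarks on what each route buys. In your favor: the continuity/open-neighbourhood argument around the symmetric point of the surface actually \emph{proves} the ``infinitely many equilibria'' claim for $\alpha$ in the interior of $A_2$ (continuity of $\mu_i^{*}$ in $\lambda_{-i}$ is immediate from the implicit function theorem, since $e^{\mu}(\mu-1)$ is strictly increasing), whereas the paper supports precisely this claim only ``via numerical simulations''---a genuine strengthening. One caveat on (ii): $\Lambda_B$ is a superlevel set of the convex function $\sum_j\alpha e^{-\lambda_j}$ and is therefore \emph{not} convex, so ``strict concavity on $\Lambda_B$'' should be restated as: the formula $\sum_j(\log\lambda_j-\lambda_j)-\log\bigl(\sum_j\alpha e^{-\lambda_j}\bigr)$ is strictly concave on all of $(0,\infty)^I$, hence has at most one stationary point, and every NE interior to $\Lambda_B$ must be such a stationary point. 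In the paper's favor: the $h$/$h_{-i}$ machinery is exactly what settles the points you flag as obstacles---the joint infeasibility of $1\le\lambda_i\le\mu_i^{*}$ with the surface constraint when $\alpha\in A_3$ (needed to exclude $\Lambda_C$ there), and the degeneration of the continuum to the single symmetric profile at the endpoints $\alpha=e/I$ and $\alpha=e^{I/(I-1)}/I$---so to close those steps you would essentially have to reprove the paper's Lemma \ref{lem:lemma5} or import it.
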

\begin{proof}
In Section \ref{appendixproof:ExistAndConverg} of the Appendix we provide the detailed proof according to which
$\mathcal{G_P}$ is a potential game and in Section \ref{appendixproof:DetailedAnalysisOfEquil} we use iterated strict
dominance to find the Nash equilibrium $\mathbf{\lambda}^{*}$ which
depends on parameter $\alpha$.
\end{proof}

\begin{table}[t]
\caption[Pricing Game Equilibriums]{Equilibriums of $I$ operators competition for different
values of $\alpha$.} 
\centering 
\begin{tabular}{c c c c} 
\hline\hline 
Prices/Rev. & $\alpha\in A_1$ & $\alpha\in A_2$ & $\alpha\in A_3$  \\ [0.5ex] 
\hline 
$\lambda_{i}^{*}$ & $1$ & $\lambda_i\neq\lambda_j$                & $\frac{I}{I-1}$  \\ %
            &     &  or $\lambda_i=\lambda_j=\log{I\alpha}$ &                  \\ %
$R_{i}^{*}$       & $\frac{\alpha N}{e}$ & $R_i\neq R_j$                          & $\frac{N}{I-1}$   \\ %
            &                      &  or $R_i=R_j=\frac{N}{I}\log{I\alpha}$ &                         \\ %
$\mathbf{x}^{*}$    & $X_A$          & $X_C$                                  & $X_B$                   \\ %
\hline 
\end{tabular}
\label{omertable:equilibrium} 
\end{table}

In conclusion, $\mathcal{G_P}$ is a non-cooperative game of complete information that attains certain pure Nash equilibriums (NE) which depend on parameters $\alpha_i,\,i=1,2,\ldots,I$. It is proved to be a potential game and hence the equilibriums can be reached if $\mathcal{G_P}$ is played repeatedly and operators update their strategy by simple best response or other similar utility improvement methods. If $\alpha_i$ parameters are equal, i.e. $\alpha_i=\alpha,\,\forall i\in\mathcal{I}$, then the NE is unique for $\alpha\in A_1$ or $\alpha\in A_3$. For the case $\alpha\in A_2$, the reached equilibrium depends on the initial price vector.

\section{Market Outcome and Regulation}\label{Omersec:section5 - Regulation}
The outcome of the users and operators interaction can be
characterized by the following two fundamental criteria: the
efficiency of the users market and the total revenue the operators
accrue. We show that both of them depend on parameter $\alpha$ and we further explore the impact of $W$ and $U_0$ on them.
Accordingly, we analyze the problem from a mechanism design
perspective and explain how a regulator, as the municipal WiFi
provider in the introductory example, can bias the market operation
(outcome) by adjusting the value of $\alpha$. We consider different
regulation methods and discuss their implications.

\subsection{Market Outcome and Regulation Criteria}

\subsubsection{Market Efficiency} A market is efficient if the users enjoy
high utilities in the stationary point. However, in certain
scenarios, the services provided by the $P_0$ may impose an
additional cost to the system (e.g. the cost of the municipal WiFi
provider is borne by the citizens) and hence it would be preferable
to have all the users served by the $I$ operators. Therefore, we use
the following two metrics to characterize the efficiency of the
market: \textbf{(i)} the aggregate utility ($U_{agg}$) of users in
the stationary point $\mathbf{x}^{*}$, and \textbf{(ii)} the cost $J_0=x_{0}NU_0$ incurred by the neutral operator $P_0$ for serving the portion $x_0$ of the users. Both of these
metrics depend on parameter $\alpha$ and hence on system parameters $W$ and $U_0$.

In detail:
\begin{itemize}
\item When $\alpha\in A_1=(0,e/I)$, it is $\mathbf{x}^*\in X_A$,
which means that a portion of users $x_{0}^{*}>$ selects $P_0$. The latter incurs cost of $J_0=x_{0}^{*}NU_0$ units. All users receive utility of $U_0$ units and hence the aggregate utility is $U_{agg}=NU_0$.
\item On the other hand, when $\alpha\in A_2=[e/I,e^{I/(I-1)}/I]$, it is
$\mathbf{x}^*\in X_C$. In this case, all users are served by the $I$
operators with marginal utility, i,e. $U_{i}^{*}=U_0$ for $i=1,2,\ldots,I$. There is no cost for $P_0$, i.e. $J_0=0$. Again, it is $U_{agg}=NU_0$ but unlike the previous case, there is no cost for $P_0$.
\item Finally, if $\alpha\in A_3=(e^{I/(I-1)}/I,\infty)$ it is $\mathbf{x}^*\in X_B$. All users
are served by the $I$ operators, i.e. $x_{0}^{*}=0$ and $J_0=0$, and receive high
utilities $U_{i}^{*}>U_0$, $i=1,2,\ldots,I$. The welfare is higher in this case, i.e. $U_{agg}>NU_0$.
\end{itemize}
In summary, the aggregate utility of the users changes with $\alpha$ as follows:
\begin{equation}
U_{agg}=
\begin{cases}
N U_0, & \text{if $\alpha \in A_1 \cup A_2$,}\\
N(log(\frac{WI}{N})-\frac{I}{I-1}), & \text{if $\alpha \in A_3$.}
\end{cases}
\label{omereq:SW}
\end{equation}
It can be easily verified that $U_{agg}$ is a continuous function.

We have expressed $U_{agg}$ in terms of $W$ and $U_0$ in order to investigate the impact of the system parameters in the market. When $\alpha\in A_1\cup A_2$, $U_{agg}$ increases with $U_0$ and is independent of the spectrum $W$. On the contrary,
when $\alpha\in A_3$, $U_{agg}$ increases with $W$ and is independent of $U_0$. Notice that when the value of $\alpha$
changes from $A_1$ to interval $A_2$, $U_{agg}$ remains the same but the other metric of efficiency, the cost of neutral operator $J_0$, is improved:
\begin{equation}
J_{0}=
\begin{cases}
\frac{\alpha INU_0}{e}, & \text{if $\alpha \in A_1$,}\\
0, & \text{if $\alpha \in A_2 \cup A_3$.}\\
\end{cases}
\end{equation}

\subsubsection{Revenue of Operators}
When $\alpha$ lies in the interval $A_1$, the optimal prices are
$\lambda_{i}^{*}=1$, $\forall i\in\mathcal{I}$ and all the operators
accrue the same revenue $R_i^*=\alpha Ne^{-1}=We^{-(U_0+1)}$, which is proportional to
$\alpha$, increases with the available spectrum $W$, decreases with $U_0$ and is independent of
the number $N$ of users. In Figure \ref{omerfig:figure3} we depict the revenue of
each operator for different values of $\alpha$, in a duopoly market. Notice that the revenue
increases linearly with $\alpha\in(0, e/2)$.

When $\alpha\in A_2$, the competition of the operators may attain
different equilibria, $\mathcal{\lambda}^*\in \Lambda_C$, depending
on the initial prices and on the sequence the operators update their
prices. In Figure \ref{omerfig:figure1} we present the revenue of
two operators (duopoly) at the equilibrium, for various initial
prices and for $\alpha=e\in A_2$. Here we assume that the $1^{st}$
operator is able to set his price $\lambda_1(0)$ before the $2^{nd}$
operator. Also, in Figure \ref{omerfig:figure3} we illustrate the dependence of the revenue of
the operators on the value of $\alpha$ when it lies in $A_2$, given
that $\lambda_1(0)=1.1$. For certain prices, e.g. when
$\lambda_1(0)=\log2\alpha$, both operators accrue the same revenue
at the equilibrium, $R_{1}^{*}=R_{2}^{*}=\frac{N\log2\alpha}{2}$.

If $\alpha\in A_3=(e^{I/(I-1)},\infty)$ all operators set their
prices to $\lambda_i^*=I/(I-1)$ and get $R_i^*=N/(I-1)$ units, as
shown in Table \ref{omertable:equilibrium}. Figure
\ref{omerfig:figure2} depicts the competition of two operators and
the convergence to the respective Nash equilibria for $\alpha=e^3
\in A_3$. We assume that both operators have selected prices
$\lambda_{1}(0)=\lambda_{2}(0)=\log{2\alpha}\approx 3.7$. However,
this price vector does not constitute a NE and hence an operator
(e.g. the $1^{st}$) can temporarily increase his revenue by
decreasing his price to $\lambda_1=3$. Accordingly, the other
operator ($2^{nd}$) will react by reducing his price to
$\lambda_2=2.5$. Gradually, the competition of the operators will
converge to the NE where both of them will set
$\lambda_1^*=\lambda_2^*=2/(2-1)=2$ and will have revenue
$R_{1}^{*}=R_{2}^{*}=1$. Interestingly, the revenue of both
operators in the equilibrium is lower than their initial revenue
when they did not compete. Finally, notice that, unlike the aggregate utility $U_{agg}$,
the revenue of the operators depends only on $\alpha=W/(Ne^{U_0})$ and not the specific values of $W$ and $U_0$.

Before we proceed, let us summarize the above results:
\begin{itemize}
\item If $\alpha \in A_1=(0,e/I)$, it is $R_{i}^{*}=\alpha Ne^{-1}=We^{-(U_0+1)}$, $i=1,2,\ldots,I$. Operators receive equal revenue which is \textbf{(i)} proportional to $W$, \textbf{(ii)} inversely proportional to $U_0$ and \textbf{(iii)} independent of the number of users $N$.
\item If $\alpha \in A_2=[e/I,\frac{e^{I/(I-1)}}{I})$, $R_{i}^{*}$ depends on the initial prices operators select. In the particular case that a single operator $i$ sets first his price $\lambda_i$ so as to be $\lambda_i(0)=\log{I\alpha}$, then all operators obtain finally equal revenue $R_{i}^{*}=\frac{N\log{I\alpha}}{I}$.
\item If $\alpha \in A_3=[\frac{e^{I/(I-1)}}{I},\infty)$, it is $R_{i}^{*}=\frac{N}{I-1}$. Operators receive equal revenue which is \textbf{(i)} proportional to $N$, \textbf{(ii)} independent of $U_0$ and $W$.
\end{itemize}

\subsection{Regulation of the Wireless Service Market}

Since both the market efficiency and the operator revenue depend on
$\alpha$ and system parameters $W$ and $U_0$, a regulating agency can act as a \emph{mechanism designer}
and steer the outcome of the market in a more desirable equilibrium
according to his objective. This can be achieved by determining
directly or indirectly (e.g. through pricing) the amount of spectrum
$W$ each operator has at his disposal, or by intervening in the
market and setting the value $U_0$ as the example with the municipal
WiFi Internet provider. This process is depicted in Figure
\ref{omerfig:MarketRegulation2}.

\begin{figure}[t]
\begin{center}
\epsfig{figure=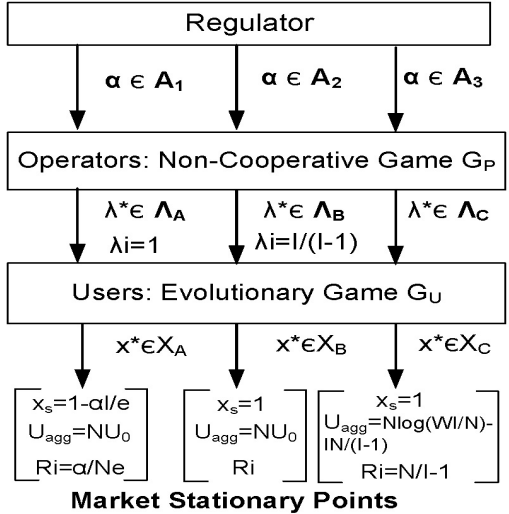,
width=7cm,height=7cm}
\end{center}
\caption[Regulation and Interdependency of Operators and Users Games
]{The regulator selects parameter $\alpha$, the operators compete
and select the respective optimal prices $\lambda_{i}^{*}$, and
then, the users are divided among the operators.}
\label{omerfig:MarketRegulation2}
\end{figure}

\subsubsection{Regulating to Increase Market Efficiency}

First, we highlight the impact of parameters $W$ and $U_0$ on the efficiency metrics. This is of crucial importance because tuning $W$ or $U_0$ has different implications for the regulator and the market. For example, as it is explained below, the regulator can achieve the same level of market efficiency either by selling more spectrum to operators, e.g. by decreasing the spectrum price, or by allocating more spectrum to the neutral operator:
\begin{itemize}
\item \emph{Assume that $U_0$ is fixed}. As the allocated spectrum $W$ to each operator increases, aggregate utility $U_{agg}$ remains constant until parameter $\alpha$ increases up to $\alpha\geq \frac{e^{I/(I-1)}}{I}$. When $\alpha \in A_3$, $U_{agg}$ is log-proportional to $W$. Also, the cost $J_0$ increases with $W$, as long as $\alpha\in A_1$, and becomes zero for larger values of $\alpha$.

\item \emph{Assume that $W$ is fixed}. $U_{agg}$ increases with $U_0$ as long as $\alpha\in A_1 \cup A_2$. For larger values of $\alpha$, $U_{agg}$ does not depend directly on $U_0$. Additionally, the cost $J_0$ increases with $U_0$ as long as $\alpha\in A_1$ while for larger values of $\alpha$ it becomes zero.
\end{itemize}

Let us now give a specific scenario for regulation. Assume that initially $\alpha \in A_1=(0,e/I)$. Hence, a portion of users is not served by anyone of the $I$ operators,
$x_{0}^{*}>1$ and all the users receive utility equal to $U_0$. The
regulator can improve the market efficiency, i.e. increase $U_{agg}$ and decrease $J_0$, by increasing the value of $\alpha$. This
can be achieved either by increasing $W$ or decreasing $U_0$. Let us
assume that the regulator selects the first method. For example, he
can change the price of $W$ and allow the operators to acquire more
spectrum. If $W$ is increased until $\alpha=e/I$, then the market
stationary point $\mathbf{x}^{*}$ switches to $X_B$. In this case,
all users are served by the market, $x_{0}^{*}=0$, but they still
receive only marginal utility, $U_{agg}=N U_0$. If the regulator
provides even more spectrum $W$ to operators so as $\alpha>e^{I/(I-1)}/I$, then $x_{0}^{*}=0$ and moreover the users perceive higher utility because $U_{agg}$ increases proportional
to $\log{W}$, eq. (\ref{omereq:SW}). Obviously, the improvement in market efficiency comes at the cost (\emph{opportunity cost}) of the additional spectrum the regulator must provide to operators.

On the other hand, the regulator may prefer to directly intervene in
the market through $P_0$ and tune $U_0$. If $U_0$ decreases, the value of $x_{0}^{*}$ decreases and users
return to the market (to the $I$ operators). The portion of users $x_{0}^{*}$ becomes zero when $\alpha=e/I$. This way, the cost of the regulator $J_0$ decreases (since $P_0$ serves less
users) but at the same time the aggregate utility, $U_{agg}=NU_0$, is also reduced.
Namely, $U_{agg}$ decreases linearly with $U_0$ until
$\alpha=e^{I/(I-1)}/I$ and remains constant for larger values of
$U_0$, eq. (\ref{omereq:SW}). Again, the decision of the regulator
depends on his cost and on the efficiency he wants to achieve.
In conclusion, depending on they system parameters ($N,\,W,\,I$) the efficiency of the market may be improved either by increasing the resources of operators (sell more spectrum) or by rendering highly competitive the services provided by the neutral operator $P_0$.

\begin{figure}
\center
\epsfig{figure=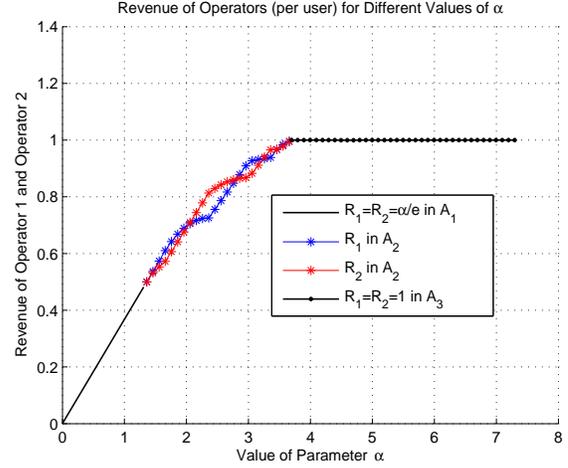, width = 8.5cm,
height=6.5cm} \caption[Operators Revenue for different values of
$\alpha$]{The outcome of the operator competition ($\mathcal{G_P}$
equilibrium) for different values of parameter $\alpha$, i.e. in
different intervals.} \label{omerfig:figure3}
\end{figure}
\begin{figure}
\center \epsfig{figure=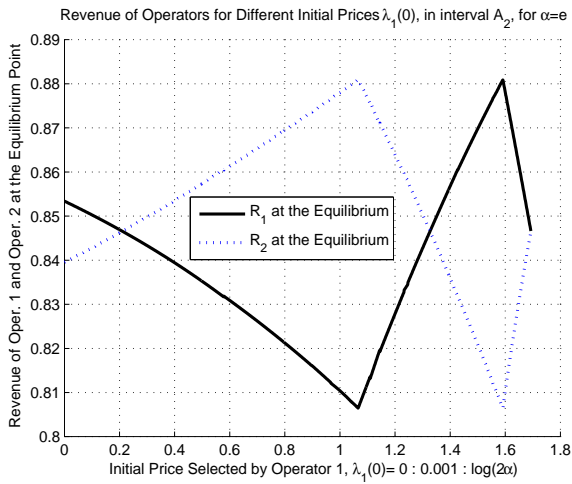, width =
8.5cm, height=6.5cm} \caption[Operators Revenue for different
initial prices]{The outcome of the competition of two
operators, with $\alpha=e \in A_2$ and $N=1000$. Operator $1$ is
assumed to set his price $\lambda_1(0)$ first. $R_{1}^{*}$ and $R_{2}^{*}$ depend on $\lambda_1(0)$.}\label{omerfig:figure1}
\end{figure}

\subsubsection{Regulating for Revenue}
As illustrated in Table \ref{omertable:equilibrium}, the revenue of
the operators increases proportionally to $\alpha$ for $\alpha\in
A_1$, and proportionally to $\log{\alpha}$ for $\alpha\in A_2$,
while it remains constant when $\alpha\in A_3$. Notice that the
revenue, unlike the market efficiency, depends on the value of
$\alpha$ and not on the specific combination of $W$ and $U_0$. These
results are presented in Figure \ref{omerfig:figure4} for a market
with $I=3$ operators and $N=1000$ users. In the upper plot, it is
$U_0=0.1$ and the regulator increases the value of $\alpha$ by
increasing $W$. The aggregate utility is constant and equal to
$U_{agg}=N U_0=100$ for $\alpha<e^{3/(3-1)}/3 \approx 1.5$ while it
increases proportionally to $\log{W}$ for $\alpha>1.5$. Obviously,
increasing the spectrum of operators improves both their revenue and
the efficiency of the market.

In the lower plot, the spectrum at the disposal of each operator is
constant, $W=5000$, and the regulator increases the value of
$\alpha$ by decreasing $U_0$. In this case, the total revenue
increases but at the expense of market efficiency. When $\alpha\in
A_1 \cup A_2 =(0,e^{1.5}/3]$, the aggregate utility $U_{agg}$ is
reduced as $U_0$ decreases but for $\alpha> e^{1.5}/3$ it remains
constant. Notice that for very small values of $\alpha$, $U_{agg}$ is large. However, this desirable result comes at a cost for the regulator. Namely, in this case only a small portion of users are served by the market, while the rest of them select $P_0$. Therefore, the incurred cost $J_0$ for the regulator is high.

\begin{figure}
\center
\epsfig{figure=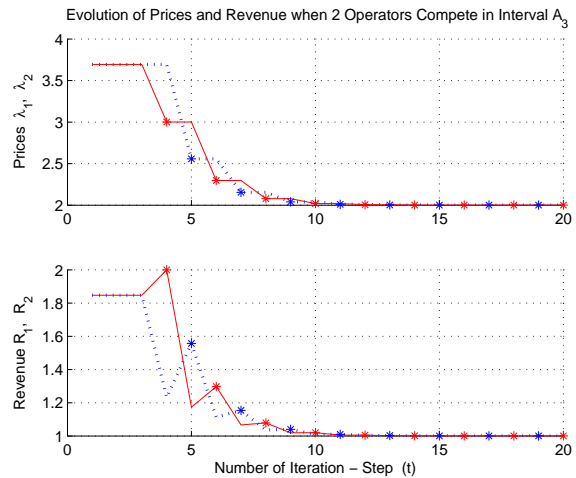, width
= 8.5cm, height=6.5cm} \caption[Evolution of Operators Price
Competition]{Evolution of operator competition for $\alpha = e^3 \in
A_3$. The game is played repeatedly and operators updated myopically their price based on the previous strategy of the other operators.} \label{omerfig:figure2}
\end{figure}
\begin{figure}
\center \epsfig{figure=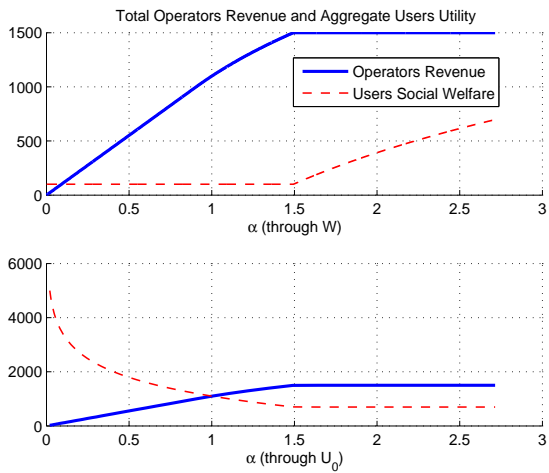,
width = 8.5cm, height=6.5cm} \caption[Market Revenue and Aggregate
Welfare for various $\alpha$]{Total revenue of operators and
aggregate utility of the users for different values of parameter
$\alpha$. In the upper plot, the value of $\alpha$ changes through
$W$ while in the lower plot it changes through the tuning of $U_0$.}
\label{omerfig:figure4}
\end{figure}

Another interesting point in Figure \ref{omerfig:figure4} is the following. In the upper subplot, for $U_0=0.1$, the total operators revenue is $R_{tot}=1500$ units and the aggregate utility is $U_{agg}=100$, achieved by increasing the spectrum $W$ until $W=1657.8$ units, which yields $\alpha=1.5$. In the lower plot, the same total revenue is reached for $W=5000$, and decreasing $U_0$ until $U_0=1.204$ units. In this case, the aggregate utility is $U_{agg}=1204$ units. If, for example, the regulator is interesting only in maximizing the revenue of operators, then he would prefer the first method since it requires less spectrum and lower value for $U_0$.

\section{Conclusions} \label{Omersec:section6 - conclusions}

In this paper, we studied the operators price competition in a wireless services market where users have a certain reservation utility $U_0$. We modeled the users interaction as an evolutionary game and the competition of the operators as a non cooperative game of complete information. We proved that the latter is a potential game and hence has pure Nash equilibriums. The two games are realized in different time scale but they are interrelated. Additionally, both of them depend on the reservation utility $U_0$ and the amount of spectrum $W$ each operator has at his disposal. Accordingly, we considered a regulating agency and discussed how he can intervene and change the outcome of the market by tuning either $U_0$ or $W$. Various regulation methods yield different market outcomes and induce a different cost for the regulator.

\appendix

\section{Analysis of the Evolutionary Game $\mathcal{G_U}$}

\subsection{Derivation of Evolutionary Dynamics} \label{appendixproof:derivofevolutionarydynamics}
Here, we derive the new differential equations that describe the
evolution of the market of the users under the new introduced
revision protocol. Recall that, the latter is described by the following
equations:
\begin{equation}
\rho_{ij}(t)=x_j(t)[U_j(t)-U_i(t)]_{+}, \forall i,j\in\mathcal{I}
\label{eq:Revision_Oper}
\end{equation}
\begin{equation}
\rho_{i0}(t)=\gamma[U_0-U_i(t)]_{+}, \forall i\in\mathcal{I}
\label{eq:Revision_toNeut}
\end{equation}
\begin{equation}
\rho_{0i}(t)=x_i(t)[U_i(t)-U_0]_{+}, \forall i\in\mathcal{I}
\label{eq:Revision_fromNeut}
\end{equation}
where $\rho_{ij}(t)$ is the rate at which users associated with operator $i$ switch to operator $j$ in time slot $t$, $\rho_{i0}(t)$ is the switch rate from operator $i$ to neutral operator $P_0$ and $\rho_{0i}(t)$ the rate at which users return from $P_0$ to an operator $i\in\mathcal{I}$ in the market. The constant value
$\gamma\in R^{+}$ represents the frequency of the direct selection.

For imitation-based revision protocols, the dynamics of the system can be described with the well-known
replicator dynamics \cite{sandholm1}. The hybrid revision protocol defined in equations (\ref{eq:Revision_Oper}),
(\ref{eq:Revision_toNeut}) and (\ref{eq:Revision_fromNeut}) is in part imitation-based ($\rho_{ij}(t)$ and $\rho_{0i}(t)$) and in part a
probabilistic direct selection of the neutral operator ($\rho_{i0}(t)$). Therefore, the respective evolutionary dynamics of
the system cannot be described by the replicator dynamic equations which correspond to the pure imitation mechanism. We have to stress
that the hybrid protocol that we introduce, differs from the hybrid protocol in \cite{sandholm1} in that users select directly only the neutral operator and not the other $I$ operators.

The portion of users $x_i$ who are associated with operator $i$
changes from time $t$ to the time $t+\delta t$, according to the
following equation:
\begin{eqnarray}
x_i(t+\delta t)&=&x_i(t)-x_i(t)\delta t\sum_{j\neq
0}x_j(t)(U_j(t)-U_i(t))_{+}  \nonumber \\
&-& x_i(t)\delta t\gamma(U_0(t)-U_i(t))_{+} \nonumber \\
&+& \sum_{j=0}^{I}\delta tx_j(t)x_i(t)(U_i(t)-U_j(t))_{+}
\end{eqnarray}
for $\delta t\rightarrow 0$ we obtain the derivative:
\begin{eqnarray}
\frac{d x_i(t)}{dt}&=& x_i(t)[\sum_{j\neq
0}x_j(t)U_i(t)-\sum_{j\neq 0}x_j(t)U_j(t) \nonumber \\
&-& \gamma(U_0-U_i)_{+} + x_0(t)(U_i-U_0)_{+}] \nonumber
\end{eqnarray}
or, if we omit the time index and rewrite the equation:
\begin{eqnarray}
\frac{d x_i(t)}{dt}&=& x_i[U_i-U_{avg}-x_0(U_i-U_0) - \gamma
(U_0-U_i)_{+} \nonumber \\
&+&x_0(U_i-U_0)_{+}]\nonumber
\end{eqnarray}
which can be analyzed in:
\begin{equation}
\frac{d x_i(t)}{dt}=x_i (U_i-U_{avg}),\,\,\forall i \in
\mathcal{I}^+ \label{eq:x-i-dot}
\end{equation}
\begin{equation}
\frac{d x_j(t)}{dt}=x_j [U_j-U_{avg}-(\gamma -
x_0)(U_0-U_j)],\,\,\forall j \in \mathcal{I}^- \label{eq:x-j-dot}
\end{equation}
where $\mathcal{I^+}$ is the set of operators offering utility
$U_i(t)\geq U_0$, and $\mathcal{I^-}$ is the set of operators
offering utility $U_j(t)<U_0$.

The dynamics of the population $x_0$ can be derived in a similar
way:
\begin{eqnarray}
x_0(t&+&\delta t)=x_0(t)-x_0(t)\delta t\sum_{i\neq 0}x_i(t)
(U_i-U_0)_{+} \nonumber \\ 
&+& \sum_{i \neq 0}x_i(t)\delta t\gamma(U_0-U_i)_{+}
\end{eqnarray}
which can be written as:
\begin{equation}
\frac{d x_0(t)}{dt}=(x_0 \sum_{i \in I^+}{x_i(U_0-U_i)} +
\gamma\sum_{j \in \mathcal{I}^-}{x_j(U_0-U_j)}) \label{eq:x-0-dot}
\end{equation}

Equations (\ref{eq:x-i-dot}), (\ref{eq:x-j-dot}) and
(\ref{eq:x-0-dot}) describe the evolutionary dynamics of game
$\mathcal{G_U}$.

\subsection{Analysis of Stationary Points}\label{appendixproof:AnalysisOfStationaryPoints}
Despite the different dynamics, the system reaches the same
stationary points as if users where employing the typical imitation revision protocol. In detail, the market state vector at a fixed
point, $\mathbf{x}^{*}=(x_{i}^{*}$, $x_{j}^{*}$, $x_{0}^{*}$: $\forall\, i \in
\mathcal{I}^+$, $\forall\,j \in \mathcal{I}^-)$, can be found by the
following set of equations:
\begin{equation}
\frac{d x_i(t)}{dt}=\frac{d x_j(t)}{dt}=\frac{d
x_0(t)}{dt}=0\,\,\forall i\in\mathcal{I}^{+},\,j\in\mathcal{I}^{-} \label{eq:derivatives-are-zero}
\end{equation}
\begin{lem}
The stationary points of the evolutionary dynamics defined in
equations (\ref{eq:x-i-dot}), (\ref{eq:x-j-dot}) and
(\ref{eq:x-0-dot}) are identical to the stationary points of the
ordinary replicator dynamics \cite{sandholm1} given by:
\begin{equation}
\Dot{x}_i(t)=0 \Rightarrow
x_i(t)[U_i(t)-U_{avg}(t)]=0,\,\forall\,i\in\mathcal{I}
\label{eq:Pi_stationarypoints}
\end{equation}
and
\begin{equation}
\Dot{x}_0(t)=0\Rightarrow x_0(t)[U_0-U_{avg}(t)]=0
\label{eq:P0_stationarypoints}
\end{equation}
\label{lemma0}
\end{lem}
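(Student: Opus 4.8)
The plan is to establish the set equality by proving both inclusions, after first observing that on the operators of $\mathcal{I}^{+}$ the two dynamics are literally the same: equation (\ref{eq:x-i-dot}) is $\dot{x}_i=x_i(U_i-U_{avg})$, so $\dot{x}_i=0$ is exactly the replicator condition (\ref{eq:Pi_stationarypoints}). Consequently the only possible gap between the two stationary sets lives in the coordinates of $\mathcal{I}^{-}$, whose equation (\ref{eq:x-j-dot}) carries the extra direct-selection drift $-(\gamma-x_0)(U_0-U_j)$, and in the neutral coordinate, whose equation (\ref{eq:x-0-dot}) contains the $\gamma$-weighted inflow to $P_0$. The whole argument therefore reduces to one structural claim: at any fixed point no operator delivering utility strictly below $U_0$ can carry positive mass, i.e. $\mathcal{I}^{-}$ is effectively empty. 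Once this holds, both systems collapse to (\ref{eq:Pi_stationarypoints})--(\ref{eq:P0_stationarypoints}) and the sets coincide.

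I would dispatch the easy inclusion (replicator fixed point $\Rightarrow$ hybrid fixed point) by substituting the three admissible configurations, Cases A, B and C, directly into the switching rates (\ref{eq:Revision_Oper})--(\ref{eq:Revision_fromNeut}) from which the mean dynamics were derived. In each case every mass-carrying operator offers one and the same utility, so the imitation rate $\rho_{ij}=x_j[U_j-U_i]_{+}$ vanishes; this common utility is never below $U_0$, so the outflow $\rho_{i0}=\gamma[U_0-U_i]_{+}$ to the neutral operator also vanishes; and either $x_0=0$ (Cases B and C), which kills the return flow from $P_0$, or the common utility equals $U_0$ (Case A), in which case all three rates vanish simultaneously. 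Hence each configuration yields $\dot{x}_i=\dot{x}_j=\dot{x}_0=0$.

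The substantive direction is hybrid $\Rightarrow$ replicator, and the crux is the structural claim above. Working in the interior guaranteed by Assumption 2, suppose some $j\in\mathcal{I}^{-}$ has $x_j>0$. Then the $\gamma$-sum in (\ref{eq:x-0-dot}) is strictly positive, so $\dot{x}_0=0$ forces $x_0>0$ together with at least one mass-carrying operator in $\mathcal{I}^{+}$ having $U_i>U_0$; for that operator (\ref{eq:x-i-dot}) gives $U_i=U_{avg}$, whence $U_{avg}>U_0>U_j$. Substituting $U_j<U_{avg}$ into (\ref{eq:x-j-dot}), the imitation term $U_j-U_{avg}$ is strictly negative while the direct-selection term $-(\gamma-x_0)(U_0-U_j)$ is non-positive once the selection frequency $\gamma$ dominates the residual mass $x_0\leq 1$, so $\dot{x}_j<0$, contradicting stationarity. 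Thus $\mathcal{I}^{-}$ carries no mass at any fixed point. I expect this sign analysis to be the main obstacle, since it hinges on the interplay of the two drifts in (\ref{eq:x-j-dot}); the borderline regime $\gamma<x_0$ has to be closed separately, e.g. by noting that the $x_0$-balance $x_0\sum_{\mathcal{I}^{+}}x_i(U_i-U_0)=\gamma\sum_{\mathcal{I}^{-}}x_j(U_0-U_j)$ together with the per-operator relations over-determines the $\mathcal{I}^{-}$ utilities and is incompatible with positive mass.

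Finally, once $\mathcal{I}^{-}$ is empty every mass-carrying operator obeys $x_i(U_i-U_{avg})=0$, i.e. (\ref{eq:Pi_stationarypoints}), and (\ref{eq:x-0-dot}) reduces to $\dot{x}_0=x_0\sum_{i}x_i(U_0-U_i)$, which vanishes precisely when $x_0=0$ or the common operator utility equals $U_0$, matching (\ref{eq:P0_stationarypoints}). This pins the hybrid fixed points down to exactly Cases A, B and C. One caveat worth recording in the write-up is the degenerate boundary configuration in which all operators share a common utility strictly below $U_0$ with $x_0=0$: it satisfies the bare replicator equations but is repelled by the $\gamma$-term of the hybrid dynamics, so it must be excluded as non-attainable under the valuation (\ref{omereq:log_valuation}), for which the price-to-utility map rules it out.
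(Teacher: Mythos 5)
Your overall skeleton matches the paper's: both proofs hinge on the single structural claim that at a fixed point of the hybrid dynamics no operator in $\mathcal{I}^{-}$ can carry positive mass, after which the system visibly reduces to the replicator equations. The difference is where the contradiction comes from, and that is where your argument has a genuine hole. You extract the contradiction from the sign of the $\mathcal{I}^{-}$ equation (\ref{eq:x-j-dot}), which forces you to control the sign of $-(\gamma-x_0)(U_0-U_j)$; your sign analysis only works when $\gamma\geq x_0$, and the model gives no such guarantee ($\gamma$ is an arbitrary positive constant, while $x_0$ may be anything in $[0,1]$). Your proposed patch for the regime $\gamma<x_0$ --- that the $x_0$-balance ``over-determines the $\mathcal{I}^{-}$ utilities'' --- is a gesture, not an argument, so as written the proof is incomplete precisely in the case that needs work. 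The paper avoids this issue entirely by deriving the contradiction from the $\mathcal{I}^{+}$ coordinates instead: if some mass-carrying $j$ has $U_j<U_0\leq U_{avg}$, then some mass-carrying operator must satisfy $U_i>U_{avg}$, and for that operator equation (\ref{eq:x-i-dot}) --- which contains no $\gamma$-dependent term at all --- gives $\dot{x}_i\neq 0$.

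The good news is that your route can be completed with material you already put on the table. You correctly deduce from $\dot{x}_0=0$ that $x_0>0$ and that some mass-carrying $i\in\mathcal{I}^{+}$ has $U_i=U_{avg}>U_0$ (this step is actually tighter than the paper, which asserts $U_{avg}\geq U_0$ without justification). Now use the stationarity of $x_j$ as an \emph{equation} rather than a sign estimate: $U_{avg}-U_j=(x_0-\gamma)(U_0-U_j)$. Since $x_0-\gamma<x_0\leq 1$ and $U_0-U_j>0$, the right-hand side is strictly less than $U_0-U_j$, forcing $U_{avg}<U_0$ and contradicting $U_{avg}>U_0$; this disposes of both sign regimes at once. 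Finally, your closing caveat is wrong as stated: the configuration with $x_0=0$ and all operators sharing a common utility strictly below $U_0$ \emph{is} attainable under the valuation (\ref{omereq:log_valuation}) (take prices with $\sum_i\alpha_ie^{-\lambda_i}<1$ and $x_i$ proportional to $\alpha_ie^{-\lambda_i}$); it is a replicator fixed point that the $\gamma$-term genuinely destroys, so the lemma's ``identical'' can only be read, as the paper implicitly does, in the direction that every hybrid stationary point satisfies (\ref{eq:Pi_stationarypoints}) and (\ref{eq:P0_stationarypoints}), not as a literal equality of the two fixed-point sets.
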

\begin{proof}
First we prove that, in any stationary point, $x_{j}^{*},\,j\in\mathcal{I}^{-}$ should be
equal to zero. We prove this claim by contradiction. Assume that $x_{j}^{*}>0$. Since $U_{avg}\geq U_0>U_j$, this implies
that there should be at least one operator $i$ with $U_i>U_{avg}$
and $x_{i}^{*}>0$. Therefore ($U_i-U_{avg}$) cannot be equal to zero
$\forall i \in \mathcal{I}^+$, and $\dot{x}_{i}$ will be nonzero for at least one operator. Therefore
(\ref{eq:derivatives-are-zero}) cannot be satisfied, if $x_{j}^{*}\neq 0$.

When $x_{j}=0$, the evolutionary dynamics given by eq.
(\ref{eq:x-i-dot}), (\ref{eq:x-j-dot}) and (\ref{eq:x-0-dot}) reduce
to ordinary replicator dynamics:
\begin{equation}
\Dot{x}_i(t)=x_i(t)[U_i(t)-U_{avg}(t)]\ \forall i\in\mathcal{I}\nonumber
\end{equation}
\begin{equation}
\Dot{x}_0(t)=x_0(t)[U_0-U_{avg}(t)] \label{eq:Replicator_dynamics}
\end{equation}
Stationary points are identical to the stationary points of the typical replicator dynamics, \cite{sandholm1}.
\end{proof}
Due to this lemma, the stationary points for the users
population associated with each operator $i\in\mathcal{I}$ should
satisfy one of the following conditions: $(i)$ $x_{i}^{*}=0$, or
$(ii)$ $x_{i}^{*}>0$ and $U_{i}^{*}=U_{avg}$. Similarly, for the
neutral operator $P_0$, eq. (\ref{eq:P0_stationarypoints}), it must
hold: $(i)$ $x_{0}^{*}=0$ and $U_{0}<U_{avg}$, $(ii)$
$x_{0}^{*}>0$ and $U_{0}=U_{avg}$ or $(iii)$ $x_{0}^{*}=0$ and
$U_{0}=U_{avg}$. The case $x_{i}^{*}=0$ implies zero revenue for
the $i^{th}$ operator and hence case $(i)$ does not constitute a
valid choice. Therefore, there exist in total $3$ possible
combinations (cases) that will satisfy the stationarity properties
given by eq. (\ref{eq:Pi_stationarypoints}) and
(\ref{eq:P0_stationarypoints}):
\begin{itemize}
\item \textbf{Case A}: $x_{i}^{*}$, $x_{0}^{*}>0$ and
$U_{i}^{*}=U_0$, $i\in\mathcal{I}$.
\item \textbf{Case B}: $x_{i}^{*}$, $x_{j}^{*}>0$, $x_{0}^{*}=0$ and
$U_{i}^{*}=U_{j}^{*}$, with $\,$
$U_{i}^{*},\,U_{j}^{*}>U_0$,$\,\forall\,i,\,j\in\mathcal{I}$.
\item \textbf{Case C}: $x_{i}^{*}$, $x_{j}^{*}>0$, $x_{0}^{*}=0$ and
$U_{i}^{*}=U_{j}^{*}=U_0$, $\,\forall\,i,\,j\in\mathcal{I}$.
\end{itemize}
We find now the exact value of the market state vector at the
equilibrium (stationary point) $\mathbf{x}^{*}$ for each case.
First, we define for every operator $i\in\mathcal{I}$ the scalar
parameter $\alpha_i=W_i/(Ne^{U_0})$ and the respective vector
$\mathbf{\alpha}=(\alpha_i\,:\,i\in\mathcal{I})$.

We can find the stationary points for \textbf{Case A} by using the
equation $U_{i}(W_i,x_{i}^{*},\lambda_i)=U_0$ and imposing the
constraint $x_{0}^{*}>0$:
\begin{eqnarray}
U_{i}(W_i,x_{i}^{*},\lambda_i)&=&\log{\frac{W_i}{Nx_i^{*}}}-\lambda_i=U_0
\\ \nonumber
&\Rightarrow& x_{i}^{*}=\frac{W_i}{Ne^{\lambda_i+U_0}}=\alpha_ie^{-\lambda_i}
\end{eqnarray}
and
\begin{equation}
x_{0}^{*}>0 \Rightarrow 1-\sum_{i=1}^{I}\alpha_ie^{-\lambda_i}>0 \Rightarrow \sum_{i=1}^{I}\alpha_ie^{-\lambda_i}<1
\end{equation}
Apparently, the state vector $\mathbf{x}^{*}$ depends on the
operators' price vector $\mathbf{\lambda}$. Therefore, we define the
set of all possible \textbf{Case A} stationary points, $X_A$, as
follows:
\begin{equation}
X_A=\Bigg\{x_{i}^{*}=\alpha_ie^{-\lambda_i},\forall
i\in\mathcal{I},
x_{0}^{*}=1-\sum_{i=1}^{I}\alpha_ie^{-\lambda_i}:\,\mathbf{\lambda}\in
\Lambda_A \Bigg\}\nonumber
\end{equation}
where $\Lambda_A$ is the set of prices for which a stationary point
in $X_A$ is reachable, i.e. $x_{0}^{*}>0$:
\begin{equation}
\Lambda_A=\Bigg\{
(\lambda_1,\lambda_2,\ldots,\lambda_I)\,:\,\sum_{i=1}^{I}\alpha_ie^{-\lambda_i}<1
\Bigg\} \label{eq:Lambda_A}\nonumber
\end{equation}

Similarly, for \textbf{Case B}, we calculate the stationary points
by using the set of equations
$U_{i}(W_i,x_{i}^{*},\lambda_i)=U_{j}(W_j,x_{j}^{*},\lambda_j)$,
$\forall\,i,j\in\mathcal{I}$, which yields:
\begin{equation}
\log{\frac{W_1}{Nx_{1}^{*}}}-\lambda_{1}=\log{\frac{W_2}{Nx_2^{*}}}-\lambda_2=\ldots=\log{\frac{W_i}{Nx_i^{*}}}-\lambda_i
\end{equation}
or, equivalently:
\begin{equation}
x_j^{*}=x_i^{*}\frac{e^{\lambda_i}\alpha_j}{e^{\lambda_j}\alpha_i}\
\ \forall{i,j \in\mathcal{I}} \label{eq:xi-xj-relation}
\end{equation}
Moreover since $x_0^{*}=0$ for \textbf{Case B}, the following holds:
\begin{equation}
\sum_{i \in\mathcal{I}}{x_i^{*}}=1 \label{eq:area-B-constraint}
\end{equation}
Using (\ref{eq:xi-xj-relation}) and (\ref{eq:area-B-constraint}),
\begin{equation}
x_i^{*}=\frac{\alpha_i}{e^{\lambda_i}\sum_{j
\in\mathcal{I}}\alpha_je^{-\lambda_j}} \label{eq:x_i_B}
\end{equation}
Additionally, $U_i>U_0$ implies that:
\begin{equation}
\log{\frac{W_i}{Nx_i^{*}}}-\lambda_i>U_0 \Rightarrow
x_i^{*}<\alpha_ie^{-\lambda_i} \label{eq:constraint2}
\end{equation}
Using (\ref{eq:area-B-constraint}) and (\ref{eq:constraint2}),
\begin{equation}
\sum_{i=1}^{I}{x_i^{*}}<\sum_{i=1}^{I}\alpha_ie^{-\lambda_i} \Rightarrow \sum_{i=1}^{I}\alpha_ie^{-\lambda_i}>1
\label{eq:constraint3}
\end{equation}
Therefore, according to (\ref{eq:x_i_B}) and (\ref{eq:constraint3}),
we define the set of all possible \textbf{Case B} stationary points,
$X_B$, as follows:
\begin{equation}
X_B=\Bigg\{x_{i}^{*}=\frac{\alpha_i}{e^{\lambda_i}\sum_{j=1}^{I}\alpha_je^{-\lambda_j}},\forall
i\in\mathcal{I},x_{0}^{*}=0:\,\mathbf{\lambda}\in \Lambda_B \Bigg\}\nonumber
\end{equation}
where $\Lambda_B$ is the set of prices for which a stationary point
in $X_B$ is feasible, $U_{i}^{*}>U_0$:
\begin{equation}
\Lambda_B=\Bigg\{
(\lambda_1,\lambda_2,\ldots,\lambda_I)\,:\,\sum_{i=1}^{I}\alpha_ie^{-\lambda_i}>1
\Bigg\} \label{eq:Lambda_B}\nonumber
\end{equation}

Finally, the stationary points for the \textbf{Case C} solution must
satisfy the following:
\begin{equation}
U_i = U_0,\ \ \ x_0^{*}=0
\end{equation}
which yields:
\begin{equation}
x_i^{*}=\alpha_ie^{-\lambda_i},\ \ \
\sum_{i=1}^{I}\alpha_ie^{-\lambda_i}=1
\end{equation}
Therefore, we define the set of all possible \textbf{Case C}
stationary points, $X_C$, as follows:
\begin{equation}
X_C=\Bigg\{
x_{i}^{*}=\alpha_ie^{-\lambda_i},\forall\,i\in\mathcal{I},x_{0}^{*}=0:\,\mathbf{\lambda}\in
\Lambda_C \Bigg\}\nonumber
\end{equation}
with
\begin{equation}
\Lambda_C=\Bigg\{
(\lambda_1,\lambda_2,\ldots,\lambda_I)\,:\,\sum_{i=1}^{I}\alpha_ie^{-\lambda_i}=1
\Bigg\}\nonumber
\end{equation}

\section{Analysis of the Pricing Game $\mathcal{G_P}$}\label{appendixproof:AnalysisOfPricingGame}

First, we show that the revenue of each operator $i\in\mathcal{I}$
is a continuous and a quasi-concave function. Secondly, we analyze
best response pricing in game $\mathcal{G_P}$. Then, we derive the
Nash equilibriums (NEs) of the game using iterated strict dominance.
Finally, we prove convergence to these equilibriums by showing that $\mathcal{G_P}$ is a potential game.

\subsection{Properties of the Revenue Function}\label{appendixproof:PropertOfTheRevenFunction}
The revenue function of each operator $i$ is given by the following
equation:
\begin{equation}
R_{i}(\lambda_i,\mathbf{\lambda}_{-i})=
\begin{cases}
\displaystyle\frac{\alpha_i\lambda_iN}{e^{\lambda_i}\sum_{j=1}^{I}\alpha_je^{-\lambda_j}} & \text{if $\lambda_i<l_0$},\\
\displaystyle\alpha_i\lambda_iNe^{-\lambda_i} & \text{if
$\lambda_i\geq l_0$}.
\end{cases}
\label{eq:Revenue_function}
\end{equation}
where $l_0=\log(\alpha_i/(1-\sum_{j\neq i}{\alpha_je^{-\lambda_j}}))$.

Each component (for each case) is a positive function which is also
\emph{log-concave}. This means that it is a quasiconcave function
and hence uniqueness of optimal solution is ensured for a proper constraint set. Namely, it is:
\begin{equation}
f_{A}(\lambda_i)=\log\alpha_i\lambda_iNe^{-\lambda_i}=\log\alpha_i\lambda_iN-\lambda_i
\end{equation}
and
\begin{equation}
f_{A}(\lambda_i)^{(1)}=\frac{1}{\lambda_i}-1\Rightarrow
f_{A}(\lambda_i)^{(2)}=\frac{-1}{\lambda_{i}^{2}}<0
\end{equation}
Hence, $f_A(\cdot)$ which is the log-function of $R_{i}^{A}(\cdot)$, is
concave which means that the later is log-concave and since it is
$R_{i}^{A}(\lambda_i)>0$, it is also quasi-concave. Similarly, for
the other component of the revenue function:
\begin{equation}
f_{B}(\lambda_i,\lambda_{-i})=
\log{\frac{\alpha_i\lambda_iN}{\alpha_i+\beta
e^{\lambda_i}}}=\log{\alpha_i\lambda_iN}-\log{\alpha_i+\beta
e^{\lambda_i}}
\end{equation}
where $\beta=\sum_{j\neq i}\alpha_je^{-\lambda_i}$. The
second derivative is:
\begin{equation}
f_{B}(\lambda_i,\lambda_{-i})^{(2)}=
\frac{-1}{\lambda_{i}^{2}}-\frac{\alpha_i\beta
e^{\lambda_i}}{(\alpha_i+\beta e^{\lambda_i})^{2}}<0
\end{equation}
Hence, $R_{i}^{B}(\cdot)$ is also quasiconcave. Finally, it is easy
to see that the function is continuous:
\begin{equation}
R_{i}^{A}(l_0,\lambda_{-i})=R_{i}^{B}(l_0,\lambda_{-i})=N(1-\beta)\log{\frac{\alpha_i}{1-\beta}}
\end{equation}

\subsection{Best Response Pricing in $\mathcal{G_P}$} \label{appendixproof:BRPricing}
Each operator $i$ finds his best response price $\lambda_{i}^{*}$
for each price profile of the other $I-1$ operators by solving
the following optimization problems. For the case the price vector
belongs to the set $\Lambda_A$, $\mathbf{\lambda}\in\Lambda_A$,
($\mathbf{P}_{i}^{A}$):
\begin{equation}
\max_{\lambda_i\geq 0}\alpha_i\lambda_iNe^{-\lambda_i}
\end{equation}
s.t.
\begin{equation}
\sum_{j=1}^{I}\alpha_je^{-\lambda_j}<1
\end{equation}
In order to ensure the uniqueness of the problem solution, we transform the constraint set to a closed and compact set as follows:
\begin{equation}
\lambda_{i}\geq\log{\frac{\alpha_i}{1-\sum_{j\neq
i}\alpha_je^{-\lambda_j}}}+\epsilon
\end{equation}
where $\epsilon>0$ is an arbitrary small positive constant number. As we will show immediately this transformation of the constraint set does not affect the solution of the game. The problem now is quasi-concave with a closed and compact constraint set and hence it has a unique optimal
solution, \cite{boyd} which we denote $\lambda_{i}^{A}$ and it is:
\begin{equation}
\lambda_{i}^{A}=1,\,\texttt{or}\,\lambda_{i}^{A}=\log{\frac{\alpha_i}{1-\sum_{j\neq
i}\alpha_je^{-\lambda_j}}}+\epsilon
\end{equation}
The value $\lambda_{i}^{A}=1$ is the optimal solution of the respective unconstrained problem, which yields
optimal revenue $R_{i}^{A}=\alpha N/e$, and it is feasible if $\mathbf{\lambda}=(1,\lambda_{-i})\in\Lambda_A$. Otherwise, since
$R_{i}^{A}(\cdot)$ is a decreasing function of $\lambda_i$, operator $i$ can only select the minimum price $\lambda_{i}^{A}$ such that
$(\lambda_{i}^{A},\lambda_{-i})\in\Lambda_A$.

Similarly, when the price vector belongs to the set $\Lambda_B$, i.e. $\mathbf{\lambda}\in\Lambda_B$, the revenue maximization problem for
each operator $i\in\mathcal{I}$ ($\mathbf{P}_{i}^{B}$) is:
\begin{equation}
\max_{\lambda_i\geq 0} \frac{\lambda_i\alpha_i
N}{e^{\lambda_i}\sum_{j \in\mathcal{I}}\alpha_je^{-\lambda_j}}
\end{equation}
s.t.
\begin{equation}
\sum_{j
\in\mathcal{I}}\alpha_je^{-\lambda_j}>1\ \label{eq:const-P2B-1}
\end{equation}
Similarly to the previous analysis, we transform the constraint set to a closed and compact set by using the following inequality:
\begin{equation}
\lambda_{i}\leq\log{\frac{\alpha_i}{1-\sum_{j\neq
i}\alpha_je^{-\lambda_j}}}-\epsilon
\end{equation}
This is also a concave problem which has unique solution and can be
either the optimal solution of the respective unconstrained problem,
$\lambda_{i}^{*}$ if $(\lambda_{i}^{*},\lambda_{-i})\in\Lambda_B$,
or the maximum price for which the price vector belongs to
$\Lambda_B$ ($R_{i}^{B}(\cdot)$ increases with $\lambda_i$:
\begin{equation}
\lambda_{i}^{B}=\mu_{i}^{*},\,\texttt{or}\,\lambda_{i}^{B}=\log{\frac{\alpha_i}{1-\sum_{j\neq
i}\alpha_je^{-\lambda_j}}}-\epsilon
\end{equation}

Finally, for the special case that $\mathbf{\lambda}\in\Lambda_C$,
the price of each operator $i$ is directly determined by the prices
that the other operators have selected. Namely:
\begin{equation}
\lambda_{i}^{C}=\log{\frac{\alpha_i}{1-\sum_{j\neq
i}\alpha_je^{-\lambda_j}}} \label{eq:priceareaC}
\end{equation}
Whether each operator $i$ will agree and adopt this price or not,
depends on the respective accrued revenue,
$R_{i}^{C}(\lambda_{i}^{C},\lambda_{-i})$.

In the sequel, we examine and analyze jointly the solutions of the
above optimization problems and derive the exact best response of
the $i^{th}$ operator for each vector $\mathbf{\lambda}_{-i}$ of the $I-1$ prices.

\begin{lem}
For each operator $i\in\mathcal{I}$, if $(1,\lambda_{-i})\notin \Lambda_A$, then
there is no best response price $\lambda_i^{*}$, such that
$(\lambda_i^{*},\lambda_{-i})\in\Lambda_A$. That is, operator $i$ will not select $\Lambda_A$.
\label{lem:lemma1}
\end{lem}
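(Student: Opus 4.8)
The plan is to reduce the question to a one–dimensional monotonicity argument in $\lambda_i$, after fixing $\lambda_{-i}$. First I would set $\beta=\sum_{j\neq i}\alpha_je^{-\lambda_j}$ and recall that, for a fixed $\lambda_{-i}$, the condition $(\lambda_i,\lambda_{-i})\in\Lambda_A$ is equivalent to $\sum_{j}\alpha_je^{-\lambda_j}<1$, i.e. $\alpha_ie^{-\lambda_i}<1-\beta$. When $\beta\geq 1$ this can never hold, so $\Lambda_A$ is infeasible for operator $i$ and the claim is vacuously true. Hence I would assume $\beta<1$, in which case the condition rearranges to $\lambda_i>l_0$ with $l_0=\log\big(\alpha_i/(1-\beta)\big)$; correspondingly $\lambda_i=l_0$ places the profile on the boundary $\Lambda_C$ and $\lambda_i<l_0$ places it in $\Lambda_B$.

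Next I would translate the hypothesis. Evaluating the boundary condition at $\lambda_i=1$, the assumption $(1,\lambda_{-i})\notin\Lambda_A$ means $\alpha_ie^{-1}+\beta\geq 1$, i.e. $\alpha_i/(1-\beta)\geq e$, which is exactly $l_0\geq 1$. This is the crucial consequence: the entire feasible $\Lambda_A$ range $\lambda_i>l_0$ lies to the right of $1$.

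The core step is then a monotonicity comparison. On $\Lambda_A$ the payoff is $R_i^A(\lambda_i)=\alpha_i N\lambda_ie^{-\lambda_i}$, whose derivative $\alpha_i N e^{-\lambda_i}(1-\lambda_i)$ is strictly negative for $\lambda_i>1$. Since every $\lambda_i\in\Lambda_A$ satisfies $\lambda_i>l_0\geq 1$, I get $R_i^A(\lambda_i)<R_i^A(l_0)$ strictly for all such $\lambda_i$. Now I would invoke the continuity of the revenue function established earlier, namely $R_i^A(l_0,\lambda_{-i})=R_i^C(l_0,\lambda_{-i})$, so the price $l_0$, which belongs to $\Lambda_C$ and is an admissible nonnegative price (indeed $l_0\geq 1>0$), yields strictly higher revenue than any price in $\Lambda_A$. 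Consequently no maximizer of $R_i(\cdot,\lambda_{-i})$ can lie in $\Lambda_A$: if some candidate best response $\lambda_i^{*}$ had $(\lambda_i^{*},\lambda_{-i})\in\Lambda_A$, the deviation to $l_0$ would strictly improve the revenue, a contradiction. Hence operator $i$ never selects $\Lambda_A$.

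The argument is essentially routine once the reduction is in place; the only points requiring care are the bookkeeping at the boundary — the set $\Lambda_A$ is open, so there is no maximizer inside it and the relevant comparison must be made against the boundary point $l_0\in\Lambda_C$ rather than against a putative $\Lambda_A$ optimum — and the separate treatment of the degenerate case $\beta\geq 1$, where $\Lambda_A$ is empty. The continuity identity $R_i^A(l_0)=R_i^C(l_0)$ is what lets me use $l_0$ as the dominating alternative, so I would state explicitly that $l_0$ is a feasible price and that the inequality $l_0\geq 1$ is precisely what forces $R_i^A$ to be decreasing throughout $\Lambda_A$.
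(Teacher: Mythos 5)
Your proof is correct and follows essentially the same route as the paper's: the hypothesis $(1,\lambda_{-i})\notin\Lambda_A$ forces $l_0\geq 1$, so $R_i^A$ is strictly decreasing over the feasible range $\lambda_i>l_0$, and the continuity identity $R_i^A(l_0,\lambda_{-i})=R_i^C(l_0,\lambda_{-i})$ makes the boundary price $l_0\in\Lambda_C$ strictly dominate every price in $\Lambda_A$. The only differences are cosmetic: you bypass the paper's $\epsilon$-closure of the constraint set by comparing directly against the boundary of the open set, verify the monotonicity by explicit differentiation rather than by citing quasi-concavity, and treat the degenerate case $\beta\geq 1$ (where $\Lambda_A$ is empty for operator $i$) explicitly, which the paper notes separately.
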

\begin{proof}
Given that the price vector $\mathbf{\lambda\in\Lambda_A}$, best
response price is:
\begin{equation}
\lambda_{i}^{A}=
\begin{cases}
1 & \text{if $(1,\lambda_{-i})\in \Lambda_A$},\\
l_0 + \epsilon & \text{if $(1,\lambda_{-i})\notin \Lambda_A$}.
\end{cases}
\end{equation}
where $l_0=\lambda_{i}^{C}$ is the price operator $i$ selects when $\mathbf{\lambda\in\Lambda_C}$.

If $(1,\lambda_{-i})\notin \Lambda_A$, then $l_0 +
\epsilon > 1$. Otherwise price vector $(l_0 + \epsilon, \lambda_{-i})$ will not belong to $\Lambda_A$. Therefore, $R_i^{A}(\cdot)$ is a
decreasing function at the point $\lambda_i=\lambda_0 + \epsilon$ due to
quasi-concavity property. Therefore, if $(1,\lambda_{-i})\notin \Lambda_A$, then $R_i^{C}(l_0)=R_i^{A}(l_0)>R_i^{A}(l_0+\epsilon)$ which means
that $\lambda_i^{C}$ always gives better response than
$\lambda_i^A$.
\end{proof}

\begin{lem}
Let us denote with $\mu_{i}^{*}$ the optimal solution of the
unconstraint problem $P_i^{B}$. For each operator $i\in\mathcal{I}$, if
$(\mu_{i}^{*},\lambda_{-i})\notin \Lambda_B$, then there is no best
response price $\lambda_i^{*}$, such that
$(\lambda_i^{*},\lambda_{-i})\in\Lambda_B$. \label{lem:lemma2}
\end{lem}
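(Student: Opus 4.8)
The plan is to mirror the argument of Lemma~\ref{lem:lemma1}, replacing the interior optimum $\lambda_i=1$ by $\mu_i^{*}$ and exploiting the continuity of $R_i$ across the boundary between $\Lambda_B$ and $\Lambda_C$. First I would record the geometric fact that, for a fixed $\mathbf{\lambda}_{-i}$, the slice $\{\lambda_i : (\lambda_i,\mathbf{\lambda}_{-i})\in\Lambda_B\}$ is exactly the interval $\lambda_i<l_0$, where $l_0=\log(\alpha_i/(1-\sum_{j\neq i}\alpha_je^{-\lambda_j}))$; this is obtained by solving $\sum_j\alpha_je^{-\lambda_j}>1$ for $\lambda_i$. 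Hence the hypothesis $(\mu_i^{*},\mathbf{\lambda}_{-i})\notin\Lambda_B$ is equivalent to $\mu_i^{*}\geq l_0$.

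Next I would invoke the log-concavity of $R_i^{B}$ established in Section~\ref{appendixproof:PropertOfTheRevenFunction}: since $f_B^{(2)}<0$, the function $f_B$ is strictly concave with its unique stationary point (the unconstrained maximizer) at $\mu_i^{*}$, so $R_i^{B}$ is strictly increasing on $(0,\mu_i^{*})$. Because $l_0\leq\mu_i^{*}$, the whole $B$-feasible interval $(0,l_0)$ lies on this increasing branch, and therefore $R_i^{B}(\lambda_i)<R_i^{B}(l_0)$ for every $\lambda_i<l_0$.

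The key step is to convert this into domination by a price lying \emph{outside} $\Lambda_B$. I would use the continuity of the revenue function at the boundary, namely $R_i^{B}(l_0,\mathbf{\lambda}_{-i})=R_i^{A}(l_0,\mathbf{\lambda}_{-i})=N(1-\beta)\log(\alpha_i/(1-\beta))$ with $\beta=\sum_{j\neq i}\alpha_je^{-\lambda_j}$, as shown at the end of Section~\ref{appendixproof:PropertOfTheRevenFunction}. Since $l_0$ satisfies $\sum_j\alpha_je^{-\lambda_j}=1$, the price $l_0$ belongs to $\Lambda_C$, so its revenue is $R_i^{C}(l_0)=R_i^{B}(l_0)$. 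Combining with the previous step gives $R_i^{C}(l_0)=R_i^{B}(l_0)>R_i^{B}(\lambda_i)=R_i(\lambda_i,\mathbf{\lambda}_{-i})$ for every $\lambda_i$ with $(\lambda_i,\mathbf{\lambda}_{-i})\in\Lambda_B$. Thus every feasible price in $\Lambda_B$ is strictly dominated by the boundary price $l_0\in\Lambda_C$, whence no best response can lie in $\Lambda_B$.

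The only delicate point, and precisely the reason the closed-set transformation $\lambda_i\leq l_0-\epsilon$ was introduced earlier, is that $\Lambda_B$ is an open slice on which $R_i^{B}$ attains no maximum: its supremum is approached as $\lambda_i\uparrow l_0$ but never realized inside $\Lambda_B$. My argument sidesteps this by never attempting to attain the supremum within $\Lambda_B$; instead it identifies the supremum value with the revenue at the boundary point $l_0$, which lies in $\Lambda_C$ rather than $\Lambda_B$. I would make sure to keep the inequality strict, which is legitimate because $f_B^{(2)}<0$ makes $R_i^{B}$ \emph{strictly} (not merely weakly) increasing on $(0,\mu_i^{*})$; only a strict gap rules $\Lambda_B$ out as a candidate best response.
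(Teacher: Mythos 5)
Your proof is correct and follows essentially the same route as the paper's: both arguments use the quasi-concavity (log-concavity) of $R_i^{B}$ to conclude it is increasing on the feasible slice $\lambda_i<l_0\leq\mu_i^{*}$, and then use continuity at the boundary, $R_i^{C}(l_0)=R_i^{B}(l_0)$, to show the $\Lambda_C$ price $l_0$ strictly dominates every price in $\Lambda_B$. The only cosmetic difference is that the paper phrases the comparison through the $\epsilon$-shifted point $l_0-\epsilon$ arising from its closed-set transformation, whereas you dominate the entire open interval $(0,l_0)$ directly, which is an equally valid (and arguably cleaner) way to handle the non-attainment of the supremum inside the open set $\Lambda_B$.
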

\begin{proof}
Given that the price vector $\mathbf{\lambda\in\Lambda_B}$, best
response price is:
\begin{equation}
\lambda_{i}^{B}=
\begin{cases}
\mu_{i}^{*} & \text{if $(\mu_{i}^{*},\lambda_{-i})\in \Lambda_B$},\\
l_0 - \epsilon & \text{if $(\mu_{i}^{*},\lambda_{-i})\notin
\Lambda_B$}.
\end{cases}
\end{equation}
and recall that $\lambda_{i}^{C}=l_0$.
If $(\mu_{i}^{*},\lambda_{-i})\notin \Lambda_B$, then $l_0 - \epsilon < \mu_{i}^{*}$. Otherwise, the price vector
$(l_0 - \epsilon, \lambda_{-i})$ cannot be in $\Lambda_B$. Therefore, $R_i^{B}(\cdot)$ is an increasing function at the point
$\lambda_i=\lambda_0 - \epsilon$ due to quasi-concavity property. Therefore, if $(\mu_{i}^{*},\lambda_{-i})\notin \Lambda_B$, then
$R_i^{C}(l_0)=R_i^{B}(l_0)>R_i^{B}(l_0-\epsilon)$ which means that $\lambda_i^{C}$ always gives better response than $\lambda_i^B$.
\end{proof}
In other words, the previous two Lemmas state that the only eligible best response for each operator $i\in\mathcal{I}$ in the price sets $\Lambda_{A}$ and $\Lambda_{B}$ are prices $\lambda_{i}^{*}=1$ and and $\lambda_{i}^{*}=\mu_{i}^{*}$ respectively.

\begin{thm}
The best response price of an operator $i$ is:
\begin{equation}
\lambda_{i}^{*}=
\begin{cases}
1, & \text{if $(1,\lambda_{-i})\in \Lambda_A$},\\
\mu_{i}^{*}, & \text{if $(\mu_{i}^{*},\lambda_{-i})\in \Lambda_B$},\\
\lambda_{i}^{C}=l_0, & \text{otherwise}.
\end{cases}
\end{equation}
\label{thm:theorem1}
\end{thm}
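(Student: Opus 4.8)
The plan is to treat the operator's revenue $R_i(\lambda_i,\mathbf{\lambda}_{-i})$ as a one–dimensional function of $\lambda_i$ (with $\mathbf{\lambda}_{-i}$ held fixed) and to locate its global maximum on $[0,\lambda_{max}]$ using the two facts already in hand: each branch is quasi-concave (Section \ref{appendixproof:PropertOfTheRevenFunction}) and the two branches agree at the switching point $l_0$. Since increasing $\lambda_i$ decreases $\alpha_ie^{-\lambda_i}$, the region $\lambda_i>l_0$ corresponds to $\mathbf{\lambda}\in\Lambda_A$ with revenue $R_i^{A}$ (peaked at $\lambda_i=1$), the region $\lambda_i<l_0$ corresponds to $\mathbf{\lambda}\in\Lambda_B$ with revenue $R_i^{B}$ (peaked at the unconstrained maximizer $\mu_i^{*}$), and $\lambda_i=l_0$ is the single $\Lambda_C$ price. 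Thus the whole search reduces to comparing three candidates: the interior peak of the $\Lambda_A$ branch, the interior peak of the $\Lambda_B$ branch, and the common boundary value $R_i^{C}$ at $l_0$.

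The key preliminary step is to show $\mu_i^{*}>1$, i.e. the peak of the $R_i^{B}$ branch always lies strictly to the right of the peak of the $R_i^{A}$ branch. Writing $R_i^{B}=\alpha_i\lambda_iN/(\alpha_i+\beta e^{\lambda_i})$ with $\beta=\sum_{j\neq i}\alpha_je^{-\lambda_j}$, the stationarity condition $f_B'(\mu_i^{*})=0$ reads $1/\mu_i^{*}=\beta e^{\mu_i^{*}}/(\alpha_i+\beta e^{\mu_i^{*}})<1$, which forces $\mu_i^{*}>1$. This ordering is exactly what makes the three regimes of the statement mutually exclusive and exhaustive: because $\mu_i^{*}>1$, the condition $(1,\mathbf{\lambda}_{-i})\in\Lambda_A$ is equivalent to $l_0<1$, the condition $(\mu_i^{*},\mathbf{\lambda}_{-i})\in\Lambda_B$ is equivalent to $l_0>\mu_i^{*}$, these cannot hold simultaneously, and the remaining \emph{otherwise} case is precisely $1\le l_0\le\mu_i^{*}$.

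With this ordering fixed, each case follows by combining quasi-concavity of the branches with Lemmas \ref{lem:lemma1} and \ref{lem:lemma2}. When $l_0<1$ the peak $\lambda_i=1$ is feasible in $\Lambda_A$; Lemma \ref{lem:lemma2} then rules out any $\Lambda_B$ response, and since $R_i^{A}$ is increasing on $[l_0,1]$ we get $R_i^{A}(1)>R_i^{A}(l_0)=R_i^{C}$, so $\lambda_i^{*}=1$. Symmetrically, when $l_0>\mu_i^{*}$ the peak $\mu_i^{*}$ is feasible in $\Lambda_B$; Lemma \ref{lem:lemma1} rules out any $\Lambda_A$ response, and $R_i^{B}$ decreasing on $[\mu_i^{*},l_0]$ gives $R_i^{B}(\mu_i^{*})>R_i^{B}(l_0)=R_i^{C}$, so $\lambda_i^{*}=\mu_i^{*}$. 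In the intermediate case $1\le l_0\le\mu_i^{*}$ both lemmas apply, so neither branch offers a response strictly better than the boundary; since $R_i^{A}$ is then decreasing on $[l_0,\infty)$ and $R_i^{B}$ increasing on $[0,l_0]$, both branches are maximized at $l_0$, giving $\lambda_i^{*}=l_0=\lambda_i^{C}$.

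The main obstacle I anticipate is bookkeeping rather than analysis: one must verify the equivalences $(1,\mathbf{\lambda}_{-i})\in\Lambda_A\Leftrightarrow l_0<1$ and $(\mu_i^{*},\mathbf{\lambda}_{-i})\in\Lambda_B\Leftrightarrow l_0>\mu_i^{*}$ carefully, handle the degenerate regime $\beta\ge1$ (where $l_0$ is effectively $+\infty$, only $\Lambda_B$ is reachable, and case (b) applies with $\lambda_i^{*}=\mu_i^{*}$), and confirm that the closed-set transformations of the constraints in $\mathbf{P}_i^{A}$ and $\mathbf{P}_i^{B}$ do not shift the optimum—precisely the content that the two lemmas already supply.
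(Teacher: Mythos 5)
Your proposal is correct and takes essentially the same route as the paper's own proof: the key fact $\mu_i^{*}>1$ (your first-order condition $1/\mu_i^{*}=\beta e^{\mu_i^{*}}/(\alpha_i+\beta e^{\mu_i^{*}})<1$ is exactly the paper's equation $e^{\mu_i^{*}}(\mu_i^{*}-1)=\alpha_i/\beta$), the resulting mutual exclusivity of the first two cases, and the combination of quasi-concavity with Lemmas \ref{lem:lemma1} and \ref{lem:lemma2} to let $l_0$ dominate whichever branch has an infeasible peak. Your reformulation of the case split as the position of $l_0$ relative to the ordered peaks $1<\mu_i^{*}$, and your explicit treatment of the degenerate regime $\beta\geq 1$, are only presentational refinements of the same argument.
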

\begin{proof}
First we prove that $(1,\lambda_{-i})\in \Lambda_A$ and
$(\mu_{i}^{*},\lambda_{-i})\in \Lambda_B$ cannot be true at the same
time. Since $\mu_{i}^{*}$ is the optimal solution of unconstraint
$R_{i}^{B}$:
\begin{equation}
\frac{d R_i^{B}(\lambda_i)}{ d\lambda_i}=0 \Rightarrow
e^{\mu_{i}^{*}}(\mu_{i}^{*}-1)=\frac{\alpha_i}{\sum_{j \neq
i}\alpha_je^{-\lambda_j}} \label{eq:optimal-lambdaB}
\end{equation}
It is obvious that equation (\ref{eq:optimal-lambdaB}) can only hold
when $\mu_{i}^{*}>1$. Note that if $(\mu_{i}^{*},\lambda_{-i})\in
\Lambda_B$, the vector $\mathbf{\lambda}=(l,\lambda_{-i})\in \Lambda_B$ for any
price $l<\mu_{i}^{*}$. Hence, it should also hold that $\mathbf{\lambda}=(1,\lambda_{-i})\in \Lambda_B$. With a similar reasoning, when $(1,\lambda_{-i})\in \Lambda_A$, $(l,\lambda_{-i})\in \Lambda_A$ holds for any price
$l>1$ and therefore $(\mu_{i}^{*},\lambda_{-i})\in \Lambda_A$.
Also, if $(1,\lambda_{-i})\in \Lambda_A$, $\lambda_i^{C}$ cannot be a best response, because
$R_i^{A}(1)>R_i^{A}(\lambda_i^{C})=R_i^{C}(\lambda_i^{C})$.
Similarly, if $(\mu_{i}^{*},\lambda_{-i})\in \Lambda_B$,
$\lambda_i^{C}$ is not a best response.

Finally, from Lemma \ref{lem:lemma1} and Lemma \ref{lem:lemma2}, we
can say that $\lambda_i^C$ dominates all other prices if
$(1,\lambda_{-i})\notin \Lambda_A$ and
$(\mu_{i}^{*},\lambda_{-i})\notin \Lambda_B$ which concludes the proof.
\end{proof}

\subsection{Existence and Convergence Analysis of Nash Equilibriums} \label{appendixproof:ExistAndConverg}

In the previous section, we derived the best response strategy for
each player of the game $\mathcal{G_P}$. The next important steps
are \textbf{(i)} to explore the existence of Nash Equilibriums (NE)
for $\mathcal{G_P}$, and \textbf{(ii)} to study if the convergence
to them is guaranteed. In \cite{shapley}, it is proven that if the
game can be modeled as a potential game, not only the existence of
pure NEs are ensured, but also convergence to them is guaranteed
under any finite improvement path. In other words, a potential game
always converges to pure NE when the players adjust their strategies
based on accumulated observations as game unfolds. In this section,
we provide the necessary definitions for ordinal potential games, and we prove that game $\mathcal{G_P}$ belongs in this class
of games.
\begin{defn}
A game $(\mathcal{I},\lambda,\{R_i\})$ is an \textbf{ordinal
potential game}, if there is a potential function $\mathcal{P} :
[0,\lambda_{max}] \rightarrow \mathbb{R}$ such that the following
condition holds:
\begin{equation}
\mathrm{sgn}(\mathcal{P}(\lambda_i,\lambda_{-i})-\mathcal{P}(\lambda'_i,\lambda_{-i}))= \nonumber 
\end{equation}
\begin{equation}
\mathrm{sgn}(R_i(\lambda_i,\lambda_{-i})-R_i(\lambda'_i,\lambda_{-i}))
\forall i \in \mathcal{I},\,\lambda_i, \lambda'_i \in [0,\lambda_{max}] \label{eq:potential_condition}
\end{equation}
where \textnormal{sgn}($\cdot{}$) is the sign function.
\end{defn}
\begin{lem}
The game $\mathcal{G_P}$ is an ordinal potential game.
\label{lem:lemma01}
\end{lem}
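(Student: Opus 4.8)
The plan is to establish something stronger than the stated claim: I would show that $\mathcal{P}$ is in fact an \emph{exact} potential for the auxiliary game in which operator $i$'s payoff is replaced by $\log R_i$ rather than $R_i$. Since $R_i>0$ throughout the (nondegenerate) strategy space and $\log(\cdot)$ is strictly increasing, the game $\mathcal{G_P}$ and this log-transformed game share the same preference ordering over each player's own prices, so the exact-potential property of $\mathcal{P}$ for the latter immediately yields the ordinal-potential property for the former. Concretely, I would fix a player $i$ and an opponent profile $\mathbf{\lambda}_{-i}$ and prove that the map $\lambda_i\mapsto \mathcal{P}(\lambda_i,\mathbf{\lambda}_{-i})-\log R_i(\lambda_i,\mathbf{\lambda}_{-i})$ is constant in $\lambda_i$ (it may depend on $\mathbf{\lambda}_{-i}$).

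The verification splits along the two branches of the revenue function, which by the identity $\lambda_i\ge l_0 \Leftrightarrow \sum_{j}\alpha_j e^{-\lambda_j}\le 1$ coincide exactly with the two branches of $\mathcal{P}$. In the region $\sum_j\alpha_j e^{-\lambda_j}\le 1$ one has $R_i=R_i^{A}=\alpha_i\lambda_i N e^{-\lambda_i}$, so $\log R_i^{A}=\log(\alpha_i N)+\log\lambda_i-\lambda_i$, while $\mathcal{P}=\sum_j[\log\lambda_j-\lambda_j]$; subtracting gives $\mathcal{P}-\log R_i^{A}=\sum_{j\ne i}[\log\lambda_j-\lambda_j]-\log(\alpha_i N)$, which is free of $\lambda_i$. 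In the complementary region $\sum_j\alpha_j e^{-\lambda_j}>1$ (i.e. $\mathbf{\lambda}\in\Lambda_B$, $R_i=R_i^{B}$), $\log R_i^{B}=\log(\alpha_i N)+\log\lambda_i-\lambda_i-\log\!\big(\sum_j\alpha_j e^{-\lambda_j}\big)$, and the extra $-\log\!\big(\sum_j\alpha_j e^{-\lambda_j}\big)$ term carried by $\mathcal{P}$ on this branch cancels it precisely, leaving the \emph{same} $\lambda_i$-free expression $\sum_{j\ne i}[\log\lambda_j-\lambda_j]-\log(\alpha_i N)$. Since the two constants agree, and since both $\mathcal{P}$ and $R_i$ are continuous at the seam $\sum_j\alpha_j e^{-\lambda_j}=1$ (where $\log(\cdot)$ vanishes and the two revenue branches coincide, as established in Section \ref{appendixproof:PropertOfTheRevenFunction}), the relation $\mathcal{P}(\lambda_i,\mathbf{\lambda}_{-i})=\log R_i(\lambda_i,\mathbf{\lambda}_{-i})+c(\mathbf{\lambda}_{-i})$ holds on the whole interval $[0,\lambda_{max}]$.

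With this identity the defining inequality is one line: for any two prices $\lambda_i,\lambda_i'$ of operator $i$ the constant $c(\mathbf{\lambda}_{-i})$ cancels, so $\mathcal{P}(\lambda_i,\mathbf{\lambda}_{-i})-\mathcal{P}(\lambda_i',\mathbf{\lambda}_{-i})=\log R_i(\lambda_i,\mathbf{\lambda}_{-i})-\log R_i(\lambda_i',\mathbf{\lambda}_{-i})$, and because $R_i>0$ and $\log$ is strictly increasing, the sign of the right-hand side equals $\mathrm{sgn}\big(R_i(\lambda_i,\mathbf{\lambda}_{-i})-R_i(\lambda_i',\mathbf{\lambda}_{-i})\big)$, which is exactly eq. (\ref{eq:potential_condition}).

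I expect the only genuine subtlety — as opposed to the differentiation/algebra, which is routine — to be the bookkeeping at the junction of the two branches and at the degenerate endpoint $\lambda_i=0$. One must check that the piecewise $\mathcal{P}$ glues continuously \emph{precisely} where $R_i$ switches branches, so that no spurious jump can corrupt the sign comparison; this is where the continuity result of Section \ref{appendixproof:PropertOfTheRevenFunction} does the real work. The point $\lambda_i=0$ gives $R_i=0$ and $\mathcal{P}=-\infty$ consistently, and may simply be excluded since it yields zero revenue and is never a best response. Everything else reduces to the two short log computations above.
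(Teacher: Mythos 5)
Your proof is correct and follows essentially the same route as the paper: both establish that $\mathcal{P}$ is an exact potential for the game with payoffs $\log R_i$ (the branch of $\mathcal{P}$ switching exactly where $R_i$ switches, via $\lambda_i\geq l_0 \Leftrightarrow \sum_j\alpha_je^{-\lambda_j}\leq 1$), and then invoke monotonicity of $\log$ together with $R_i>0$ to get the ordinal-potential sign condition. Your observation that $\mathcal{P}-\log R_i$ is a $\lambda_i$-free constant is just a tidier packaging of the paper's four-case computation of $\mathcal{P}(\lambda_i,\mathbf{\lambda}_{-i})-\mathcal{P}(\lambda_i',\mathbf{\lambda}_{-i})$, not a different argument.
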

\begin{proof}
We define the potential function as:
\begin{equation}
\mathcal{P}(\mathbf{\lambda})=
\begin{cases}
\sum_{j=1}^{I}{(\log{\lambda_j}-\lambda_j)},\, \text{if} \sum_{j=1}^{I}\alpha_je^{-\lambda_j}\leq 1,\\
\sum_{j=1}^{I}{(\log{\lambda_j}-\lambda_j)}-\log{(\sum_{j=1}^{I}{\alpha_je^{-\lambda_j}})}
,\,\text{else}.
\end{cases}
\end{equation}
Therefore,
\begin{eqnarray}
&\mathcal{P}&(\lambda_i,\lambda_{-i})-\mathcal{P}(\lambda_{i}^{'},\lambda_{-i})=\nonumber \\ 
&=& \begin{cases}
\log\frac{\lambda_ie^{\lambda_{i}^{'}}}{\lambda_{i}^{'}e^{\lambda_{i}}},\, \text{if}\, \lambda_i, \lambda_{i}^{'} \geq l_0 \nonumber \\

\displaystyle\log{\frac{\lambda_ie^{\lambda_{i}^{'}}(\alpha_ie^{-\lambda_{i}^{'}}+\sum_{j\neq
i}{\alpha_je^{-\lambda_j}})} {\lambda_{i}^{'}e^{\lambda_i}(\alpha_ie^{-\lambda_i}+\sum_{j\neq
i}{\alpha_je^{-\lambda_j}})}},\,\text{if}\,  \lambda_i, \lambda_{i}^{'} < l_0 \nonumber \\

 \displaystyle\log\frac{\lambda_i e^{\lambda_{i}^{'}}} {\lambda_{i}^{'}e^{\lambda_i}(\alpha_ie^{-\lambda_i}+\sum_{j\neq
i}{\alpha_je^{-\lambda_j}})} \lambda_i < l_0,\,\text{if}\, \lambda_{i}^{'} \geq l_0 \nonumber \\

\displaystyle\log\frac{\lambda_ie^{\lambda_{i}^{'}}(\alpha_ie^{-\lambda_{i}^{'}}+\sum_{j\neq
i}\alpha_je^{-\lambda_j})} {e^{\lambda_i}\lambda_{i}^{'}}
\,\text{if}\, \lambda_i \geq l_0,\,\lambda_{i}^{'} < l_0 \nonumber \\

\end{cases}
\end{eqnarray}
where $l_0=\log(\alpha_i/(1-\sum_{j\neq
i}(\alpha_je^{-\lambda_j})))$.
Moreover, using (\ref{eq:Revenue_function}),
\begin{equation}
\log{R_{i}}=
\begin{cases}
\log{\frac{\lambda_i}{e^{\lambda_i}}} + \log{\alpha_i N} & \text{if $\lambda_i \geq l_0$,}\\
\displaystyle\log{\frac{\lambda_i}{e^{\lambda_i}(\frac{\alpha_i}{e^{\lambda_i}}+\sum_{j\neq
i}{\frac{\alpha_j}{e^{\lambda_j}}})}} + \log{\alpha_i N} & \text{if
$\lambda_i < l_0$.}
\end{cases} \nonumber
\end{equation}

Now, it is straightforward to show that
$\mathcal{P}(\lambda_i,\lambda_{-i})-\mathcal{P}(\lambda'_i,\lambda_{-i})=\log{R_i(\lambda_i,\lambda_{-i})}-\log{R_i(\lambda'_i,\lambda_{-i})}$
for any operator $i \in \mathcal{I}$ and for any $\lambda_i,
\lambda'_i \in [0,\lambda_{max}]$. Since
$\log{R_i(\lambda_i,\lambda_{-i})}-\log{R_i(\lambda'_i,\lambda_{-i})}$
has always same sign as
$R_i(\lambda_i,\lambda_{-i})-R_i(\lambda'_i,\lambda_{-i})$,
condition given in (\ref{eq:potential_condition}) is satisfied, and
game $\mathcal{G_P}$ is an ordinal potential game.
\end{proof}

\subsection{Detailed Analysis of Nash Equilibriums} \label{appendixproof:DetailedAnalysisOfEquil}
In the previous section, we proved the existence of pure NE and
convergence to them. In this section, we extend our analysis further
in order to find these NEs. For the sake of simplicity, we consider
the case where all the operators have same amount of available
spectrum $W_i=W$ and hence $\alpha_i = \alpha,\,\forall i\in\mathcal{I}$.

Before starting our analysis, we rewrite constraint of set
$\Lambda_A$ given in eq. (\ref{eq:Lambda_A}) as follows:
\begin{equation}
\alpha \leq
\frac{1}{\sum_{j\in\mathcal{I}}{e^{-\lambda_j}}}=\frac{H(\{e^{\lambda_j}|j\in\mathcal{I}\})}{I}
\end{equation}
where $H(\cdot)$ is the harmonic mean function of the variables
$(e^{\lambda_1},e^{\lambda_2},\ldots,e^{\lambda_I})=(\{e^{\lambda_j}|j\in\mathcal{I}\})$:
\begin{equation}
H(\{e^{\lambda_j}|j\in\mathcal{I}\})=\frac{I}{e^{-\lambda_1}+e^{-\lambda_2}+\ldots+e^{-\lambda_I}}
\end{equation}
Therefore, if $\lambda \in\ \Lambda_A$, it is:
\begin{equation}
H(\{e^{\lambda_j}|j\in\mathcal{I}\}) \geq \alpha I
\end{equation}
Similarly, according to (\ref{eq:Lambda_B}), if $\lambda \in\
\Lambda_B$ then:
\begin{equation}
H(\{e^{\lambda_j}|j\in\mathcal{I}\}) \leq \alpha I
\end{equation}
and finally, if $\lambda \in\ \Lambda_C$:
\begin{equation}
H(\{e^{\lambda_j}|j\in\mathcal{I}\})=\alpha I
\end{equation}
Next, we define a new variable, $h$ as the natural logarithm of the
harmonic mean:
\begin{equation}
h=\log{H(\{e^{\lambda_j}|j\in\mathcal{I}\})}
\end{equation}
Note that, since $e^h$ is the harmonic mean of
$\{e^{\lambda_j}|j\in\mathcal{I}\}$, we can say that one of the
following should hold:
\begin{enumerate}
\item Every operator $i\in\mathcal{I}$ adopts the same price $\lambda_i=h$.
\item If one operator $j\in\mathcal{I}$ selects a price $\lambda_j<h$, then there must be at least
one other operator $k\in\mathcal{I}$ who will adopt a price
$\lambda_k>h$.
\end{enumerate}
Additionally, we define the variable $h_{-i}$ which is similar to
$h$ except that price of the $i^{th}$ operator is excluded. That is:
\begin{equation}
h_{-i}=\log(H(\{e^{\lambda_j}|j\in\mathcal{I}\setminus{i}\}))
\label{eq:h_i}
\end{equation}
It is obvious that if $\lambda_i > h$, then $h_{-i} < h$, if
$\lambda_i < h$, then $h_{-i} > h$, and if $\lambda_i = h$, then
$h_{-i} = h$.

\begin{lem}
If $\alpha\in A_1=(0,e/I)$, there is a unique NE
$\mathbf{\lambda}^{*}\in\Lambda_A$, with
$\mathbf{\lambda}^{*}=(\lambda_{i}^{*}=1:\,i\in \mathcal{I})$
\label{lem:lemma4}
\end{lem}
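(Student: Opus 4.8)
The plan is to identify the all-ones price vector as the unique fixed point of the best-response map of Theorem~\ref{thm:theorem1}, by splitting on the region in which a candidate equilibrium lies. Since the three price sets partition the strategy space according to whether $\sum_{j}\alpha e^{-\lambda_j}$ is less than, equal to, or greater than $1$, any equilibrium $\mathbf{\lambda}^{*}$ belongs to exactly one of $\Lambda_A,\Lambda_B,\Lambda_C$. Throughout I would use the harmonic-mean form of these constraints, namely $\mathbf{\lambda}\in\Lambda_A,\Lambda_B,\Lambda_C$ iff $H(\{e^{\lambda_j}\})$ is $>\alpha I$, $<\alpha I$, $=\alpha I$ respectively, and rewrite the hypothesis $\alpha\in A_1$ simply as $\alpha I<e$.

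First I would settle existence together with the $\Lambda_A$ case. Evaluating the constraint of $\Lambda_A$ at $(1,\ldots,1)$ gives $\sum_j\alpha e^{-1}=\alpha I/e<1$, so this vector lies in $\Lambda_A$; and since $(1,\mathbf{\lambda}_{-i})$ is then in $\Lambda_A$ for every $i$, the first branch of Theorem~\ref{thm:theorem1} returns $\lambda_i^{*}=1$, so the all-ones vector is a Nash equilibrium. For uniqueness inside $\Lambda_A$, suppose $\mathbf{\lambda}^{*}\in\Lambda_A$ is an equilibrium; if $(1,\mathbf{\lambda}_{-i}^{*})\notin\Lambda_A$ for some $i$, then Lemma~\ref{lem:lemma1} forbids any best response of $i$ from lying in $\Lambda_A$, contradicting $\mathbf{\lambda}^{*}\in\Lambda_A$. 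Hence $(1,\mathbf{\lambda}_{-i}^{*})\in\Lambda_A$ for all $i$, and Theorem~\ref{thm:theorem1} forces $\lambda_i^{*}=1$ throughout.

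Next I would exclude $\Lambda_B$. If $\mathbf{\lambda}^{*}\in\Lambda_B$ were an equilibrium, the analogous argument through Lemma~\ref{lem:lemma2} would give $\lambda_i^{*}=\mu_i^{*}$ for every $i$; but eq.~(\ref{eq:optimal-lambdaB}) can hold only for $\mu_i^{*}>1$, so every $e^{\lambda_i^{*}}>e$ and therefore $H(\{e^{\lambda_j^{*}}\})>e$. This contradicts the defining inequality $H<\alpha I<e$ of $\Lambda_B$.

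The remaining and, I expect, most delicate case is $\Lambda_C$. Here each operator's equilibrium price is the boundary value $l_0$, which by the \emph{otherwise} branch of Theorem~\ref{thm:theorem1} requires $(1,\mathbf{\lambda}_{-i}^{*})\notin\Lambda_A$. The key step is to turn this membership statement into a price bound: substituting $\sum_{j\neq i}\alpha e^{-\lambda_j^{*}}=1-\alpha e^{-\lambda_i^{*}}$, valid because $\sum_j\alpha e^{-\lambda_j^{*}}=1$ on $\Lambda_C$, reduces $(1,\mathbf{\lambda}_{-i}^{*})\notin\Lambda_A$ to $e^{-1}\geq e^{-\lambda_i^{*}}$, i.e. $\lambda_i^{*}\geq1$. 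But then all prices are at least $1$, which yields $H(\{e^{\lambda_j^{*}}\})\geq e$, again contradicting $H=\alpha I<e$ on $\Lambda_C$. With $\Lambda_B$ and $\Lambda_C$ empty of equilibria and $\Lambda_A$ containing only the all-ones vector, uniqueness follows. The main obstacle is precisely this translation of the best-response region conditions into the inequality $\lambda_i^{*}\geq1$; once that is done, the harmonic-mean bound closes both degenerate regions uniformly.
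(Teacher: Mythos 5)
Your proof is correct, and although it rests on the same scaffolding as the paper's argument (Theorem \ref{thm:theorem1}, Lemmas \ref{lem:lemma1} and \ref{lem:lemma2}, and the harmonic-mean reformulation of the price-set constraints), the mechanics by which you rule out $\Lambda_B$ and $\Lambda_C$ are genuinely different. The paper uses a single unified deviation argument: outside $\Lambda_A$ one has $h\leq\log(\alpha I)<1$, so at least one operator prices below $1$, and since $R_i^{B}$ is increasing on $(0,1)$ that operator profits by unilaterally raising his price, so no equilibrium exists in $\Lambda_B\cup\Lambda_C$. You run the logic in the opposite direction: from the best-response characterization you derive what an equilibrium in each region would have to look like --- all prices equal to $\mu_i^{*}>1$ in $\Lambda_B$ (via Lemma \ref{lem:lemma2} and eq. (\ref{eq:optimal-lambdaB})), and all prices at least $1$ in $\Lambda_C$ (via the substitution $\sum_{j\neq i}\alpha e^{-\lambda_j^{*}}=1-\alpha e^{-\lambda_i^{*}}$, which converts the failure of the first branch of Theorem \ref{thm:theorem1} into $\lambda_i^{*}\geq 1$) --- and then contradict the harmonic-mean bound $H<\alpha I<e$ (respectively $H=\alpha I<e$). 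The paper's route buys brevity and uniformity, since one argument disposes of $\Lambda_B$ and $\Lambda_C$ simultaneously; your route buys modularity, since every step is a formal consequence of the already-established best-response lemmas with no fresh appeal to monotonicity of the revenue functions, and your $\Lambda_C$ case makes explicit a fact the paper leaves implicit, namely that on $\Lambda_C$ the \emph{otherwise} branch forces every price to be at least $1$. Your handling of the $\Lambda_A$ case (existence by direct verification at the all-ones vector, uniqueness via Lemma \ref{lem:lemma1}) coincides with the paper's.
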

\begin{proof}
First, we prove that the NE cannot be in $\Lambda_B$ or $\Lambda_C$
($\mathbf{\lambda}^{*}\notin \Lambda_B \cup \Lambda_C$) if $\alpha
\in A_1=(0,e/I)$. Notice that, when the price vector is not in
$\Lambda_A$, $h \leq \log(\alpha I) < 1$ for given $\alpha$ values.
Therefore there exists at least one operator with price less than
one. Since $R_{i}^{B}$ is an increasing function between
$\lambda_i\in(0,1)$, operators with $\lambda_i<1$ would gain more
revenue by unilaterally increasing their prices. Therefore
$\mathbf{\lambda}^{*}$ can only be in $\Lambda_A$. According to
Theorem \ref{thm:theorem1}, given that the price vector is in
$\Lambda_A$, optimal price for any operator $i$ can only be
$\lambda_{i}^{A}=1$ if $(1,\lambda_{-i})\in \Lambda_A$. Since
$\mathbf{\lambda}^{*}=(\lambda_{i}^{*}=1:\,i\in \mathcal{I}) \in
\Lambda_A$ when $\alpha \in A_1$, it is a feasible and unique
solution.
\end{proof}

\begin{lem}
$\mu_{i}^{*}$ is always between $\frac{I}{I-1}$ and $h_{-i}$
\label{lem:lemma5}
\end{lem}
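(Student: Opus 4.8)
The plan is to reduce the claim to the first-order optimality condition (\ref{eq:optimal-lambdaB}) and then read off the location of $\mu_i^{*}$ from the monotonicity of a single auxiliary function. First I would specialize (\ref{eq:optimal-lambdaB}) to the symmetric case $\alpha_i=\alpha$ and rewrite its right-hand side through the harmonic mean. Since $e^{h_{-i}}$ is by definition (\ref{eq:h_i}) the harmonic mean of the $I-1$ quantities $\{e^{\lambda_j}:j\neq i\}$, we have $e^{h_{-i}}=(I-1)/\sum_{j\neq i}e^{-\lambda_j}$, so that after cancelling the common factor $\alpha$ the optimality condition becomes
\[ e^{\mu_i^{*}}(\mu_i^{*}-1)=\frac{1}{\sum_{j\neq i}e^{-\lambda_j}}=\frac{e^{h_{-i}}}{I-1}=:T. \]

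Next I would introduce the function $g(\lambda)=e^{\lambda}(\lambda-1)$ and note that $g'(\lambda)=\lambda e^{\lambda}>0$ for $\lambda>0$, so $g$ is strictly increasing on $[0,\infty)$; recall also from the proof of Theorem \ref{thm:theorem1} that $\mu_i^{*}>1$. The display above says exactly $g(\mu_i^{*})=T$, and all of $\mu_i^{*}$, $\frac{I}{I-1}$, $h_{-i}$ are nonnegative. Consequently, by strict monotonicity of $g$, locating $\mu_i^{*}$ relative to the two candidate endpoints $\frac{I}{I-1}$ and $h_{-i}$ is equivalent to comparing $T$ with the values $g\!\left(\frac{I}{I-1}\right)$ and $g(h_{-i})$.

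The two comparisons are elementary. Since $g\!\left(\frac{I}{I-1}\right)=e^{I/(I-1)}/(I-1)$, the inequality $T>g\!\left(\frac{I}{I-1}\right)$ holds exactly when $h_{-i}>\frac{I}{I-1}$, whence by monotonicity $\mu_i^{*}>\frac{I}{I-1}$ precisely in that case (with equalities matching). Since $g(h_{-i})=e^{h_{-i}}(h_{-i}-1)$, dividing the comparison of $T$ with $g(h_{-i})$ by $e^{h_{-i}}>0$ reduces it to comparing $\frac{1}{I-1}$ with $h_{-i}-1$, i.e. $\frac{I}{I-1}$ with $h_{-i}$; thus $\mu_i^{*}<h_{-i}$ precisely when $h_{-i}>\frac{I}{I-1}$. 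Combining the two implications, if $h_{-i}>\frac{I}{I-1}$ then $\frac{I}{I-1}<\mu_i^{*}<h_{-i}$, if $h_{-i}<\frac{I}{I-1}$ then $h_{-i}<\mu_i^{*}<\frac{I}{I-1}$, and all three coincide when $h_{-i}=\frac{I}{I-1}$. In every case $\mu_i^{*}$ lies between $\frac{I}{I-1}$ and $h_{-i}$, which is the claim.

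The only mild subtlety, and the step I would be most careful about, is that the target interval is unordered: the word ``between'' must cover both $\frac{I}{I-1}\le h_{-i}$ and $\frac{I}{I-1}\ge h_{-i}$. I would therefore make sure the two sign comparisons are carried out consistently, so that $T$ is genuinely bracketed between $g\!\left(\frac{I}{I-1}\right)$ and $g(h_{-i})$ rather than vacuously; everything else follows directly from $g$ being strictly increasing on the positive axis.
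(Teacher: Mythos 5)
Your proof is correct and follows essentially the same route as the paper: both rewrite the first-order condition (\ref{eq:optimal-lambdaB}) in the symmetric case as $e^{\mu_i^{*}}(\mu_i^{*}-1)=e^{h_{-i}}/(I-1)$ and then locate $\mu_i^{*}$ by monotonicity, with your explicit auxiliary function $g(\lambda)=e^{\lambda}(\lambda-1)$ merely making rigorous the comparison the paper carries out informally in its two cases. No gaps; your handling of the unordered ``between'' via the two sign comparisons matches the paper's case split on $h_{-i}\lessgtr \frac{I}{I-1}$.
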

\begin{proof}
We can rewrite equation (\ref{eq:optimal-lambdaB}) as follows:
\begin{equation}
e^{\mu_{i}^{*}}(\mu_{i}^{*}-1)=\frac{e^{h_{-i}}}{I-1}
\label{eq:optimal-lambdaB-2}
\end{equation}
where $h_{-i}$ is defined in equation (\ref{eq:h_i}). Now, if
$h_{-i} < \frac{I}{I-1}$, or equivalently if $h_{-i}-1 <
\frac{1}{I-1}$, then $\lambda_{i}^{*}$ should be greater than
$h_{-i}$ in order to satisfy (\ref{eq:optimal-lambdaB-2}). Moreover,
if $\lambda_{i}^{*}>h_{-i}$, then $\lambda_{i}^{*}-1$ should be less
than $\frac{1}{I-1}$ in order to satisfy
(\ref{eq:optimal-lambdaB-2}). Therefore,
$h_{-i}<\lambda_{i}^{*}<\frac{I}{I-1}$. Similarly, if $h_{-i} \geq
\frac{I}{I-1}$, then $\frac{I}{I-1} \leq \lambda_{i}^{*} \leq
h_{-i}$, which proves the lemma.
\end{proof}

\begin{lem}
If $\alpha\in A_3=(e^{I/(I-1)},\infty)$, there is a unique NE
$\mathbf{\lambda}^{*}\in\Lambda_B$, with
$\mathbf{\lambda}^{*}=(\lambda_{i}^{*}=I/(I-1):\,i\in \mathcal{I})$
\label{lem:lemma6}
\end{lem}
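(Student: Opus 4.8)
The plan is to mirror the structure of the proof of Lemma~\ref{lem:lemma4}, but with the roles of $\Lambda_A$ and $\Lambda_B$ reversed, leaning on the harmonic-mean reformulation together with Lemma~\ref{lem:lemma5} and the best-response characterization of Theorem~\ref{thm:theorem1}. Throughout I use that, with $\alpha_i=\alpha$, the three regions are characterized by $e^{h}>\alpha I$ on $\Lambda_A$, $e^{h}=\alpha I$ on $\Lambda_C$, and $e^{h}<\alpha I$ on $\Lambda_B$, where $h=\log H(\{e^{\lambda_j}\})$, and that $\alpha\in A_3$ means precisely $\alpha I>e^{I/(I-1)}$, i.e. $\log(\alpha I)>I/(I-1)$.

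First I would verify that the symmetric profile $\mathbf{\lambda}^{*}=(I/(I-1),\dots,I/(I-1))$ is a Nash equilibrium. It lies in $\Lambda_B$ because $e^{h}=e^{I/(I-1)}<\alpha I$. Substituting $h_{-i}=I/(I-1)$ into the best-response equation~(\ref{eq:optimal-lambdaB-2}) gives $e^{\mu_i^{*}}(\mu_i^{*}-1)=e^{I/(I-1)}/(I-1)$, which is solved by $\mu_i^{*}=I/(I-1)$; since $(\mu_i^{*},\mathbf{\lambda}_{-i}^{*})\in\Lambda_B$, Theorem~\ref{thm:theorem1} confirms that $\lambda_i^{*}=I/(I-1)$ is the best response for every $i$, so the profile is an equilibrium.

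Next I would show that any equilibrium must lie in $\Lambda_B$. For $\Lambda_A$: if the equilibrium profile were interior to $\Lambda_A$, then by Theorem~\ref{thm:theorem1} every coordinate that is both a best response and keeps the profile in $\Lambda_A$ must equal $1$ (the options $\mu_i^{*}$ and $l_0$ would place the profile in $\Lambda_B$ or on the $\Lambda_C$ boundary); but $(1,\dots,1)$ has $e^{h}=e<\alpha I$ and hence lies in $\Lambda_B$, a contradiction. For $\Lambda_C$: there each price is forced to $l_0=\lambda_i^{C}$ and $h=\log(\alpha I)>I/(I-1)$, so some operator $k$ has $\lambda_k\geq h>I/(I-1)$. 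From $e^{-\lambda_k}\le e^{-h}$ one gets $h_{-k}\le h\le\lambda_k$, and Lemma~\ref{lem:lemma5} together with~(\ref{eq:optimal-lambdaB-2}) then yields $\mu_k^{*}<\lambda_k=l_0$; the only degenerate possibility, all prices equal to $h$, is excluded because $\mu_k^{*}=\lambda_k$ in~(\ref{eq:optimal-lambdaB-2}) would force $\lambda_k=I/(I-1)$, contradicting $\lambda_k>I/(I-1)$. But $\mu_k^{*}<l_0$ means $(\mu_k^{*},\mathbf{\lambda}_{-k})\in\Lambda_B$, so by Theorem~\ref{thm:theorem1} operator $k$'s best response is $\mu_k^{*}\neq l_0$, contradicting an equilibrium in $\Lambda_C$.

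Finally, for uniqueness inside $\Lambda_B$ I would apply Lemma~\ref{lem:lemma5} to the extreme operators, where at equilibrium every $\lambda_i=\mu_i^{*}$. For the operator $k$ attaining $\lambda_{\max}$, excluding it cannot raise the harmonic mean, so $h_{-k}\le\lambda_{\max}$; Lemma~\ref{lem:lemma5} then forces $\lambda_{\max}=\mu_k^{*}\le I/(I-1)$ (the borderline $\mu_k^{*}=h_{-k}$ again gives $I/(I-1)$ by~(\ref{eq:optimal-lambdaB-2})). Symmetrically, for the operator attaining $\lambda_{\min}$ one gets $h_{-k}\ge\lambda_{\min}$ and hence $\lambda_{\min}\ge I/(I-1)$. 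Combining, $\lambda_{\max}\le I/(I-1)\le\lambda_{\min}$, so all prices coincide with $I/(I-1)$ and the equilibrium is unique. The main obstacle I anticipate is exactly the $\Lambda_C$ corner step: ruling it out requires the \emph{strict} inequality $\mu_k^{*}<l_0$ for the highest-priced operator, whose only subtlety is the degenerate all-equal case, handled cleanly by the observation that $\mu_k^{*}=\lambda_k$ in~(\ref{eq:optimal-lambdaB-2}) can occur only at $\lambda_k=I/(I-1)$.
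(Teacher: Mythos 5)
Your proof is correct and follows essentially the same route as the paper's: it rules out equilibria in $\Lambda_A$ via Theorem~\ref{thm:theorem1} and the infeasibility of the all-ones profile, rules out $\Lambda_C$ by showing the highest-priced operator satisfies $\mu_k^{*}<\lambda_k^{C}$ and can profitably undercut, and pins down the symmetric profile $\lambda_i^{*}=I/(I-1)$ inside $\Lambda_B$ using Lemma~\ref{lem:lemma5} applied to the extreme-price operators (the paper's case split on $h\gtrless I/(I-1)$ identifies the same deviating operators). Your treatment is marginally more careful than the paper's on the strictness of $\mu_k^{*}<\lambda_k$ in the degenerate all-equal case, but the underlying argument is identical.
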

\begin{proof}
First we prove that there is no NE in $\Lambda_A$ if $\alpha \geq
e/I$ (i.e. if $\alpha \in A_2 \cup A_3$). According to Theorem
\ref{thm:theorem1}, optimal price for any operator $i$ can only be
$\lambda_{i}^{A}=1$ if $(1,\lambda_{-i})\in \Lambda_A$. Otherwise
$\lambda_{i}^{C}$ dominates $\lambda_{i}^{A}$. Since
$\mathbf{\lambda}^{*}=(\lambda_{i}^{*}=1:\,i\in \mathcal{I}) \notin
\Lambda_A$ when $\alpha \in A_2 \cup A_3$, there is no NE in
$\Lambda_A$.

Secondly, we prove that there is no NE in $\Lambda_C$ if $\alpha\in
A_3=(e^{I/(I-1)},\infty)$. Recall that, when the price vector is in
$\Lambda_C$, $h = \log(\alpha I) > I/(I-1)$, which means that there
exists at least one operator with price $\lambda_i^C>I/(I-1)$ and
$\lambda_i^C \geq h$. Remember that if $\lambda_i \geq h$, then
$h_{-i} \leq h$, so $\lambda_i \geq h_{-i}$. Therefore, for an
operator $i$, $\lambda_i^C$ is greater than both $h_{-i}$ and
$I/(I-1)$. According to Theorem \ref{thm:theorem1} and Lemma
\ref{lem:lemma5}, when $(\mu_{i}^{*},\lambda_{-i})\in \Lambda_B$,
best response price of operator $i$ is $\mu_{i}^{*}$ which is
between $h_{-i}$ and $I/(I-1)$. This means that for at least one
operator, $\lambda_i^C$ is greater than $\mu_{i}^{*}$, which implies
that $(\mu_{i}^{*},\lambda_{-i})\in\Lambda_B$. This operator can
increase his revenue by reducing his price to $\mu_{i}^{*}$.
Therefore, there is no NE in $\lambda_C$ for the given $\alpha$
values, and we proved that the NE can only be in $\Lambda_B$.

Finally, we prove that the only NE is
$\mathbf{\lambda}^{*}=(\lambda_{i}^{*}=I/(I-1):\,i\in \mathcal{I})$,
if $\alpha \in A_3$. According to Lemma \ref{lem:lemma5},
$\mu_{i}^{*}$ is between $h_{-i}$ and $I/(I-1)$ for all operators.
$h$ can be greater than or less than $I/(I-1)$. If $h\geq I/(I-1)$,
unless all of the operators set their prices to $I/(I-1)$, there
exists at least one operator $i$ with price $\lambda_i$ greater than
both $h_{-i}$ and $I/(I-1)$. Hence, $\lambda_i$ is also greater than
$\mu_{i}^{*}$ and this operator can increase his revenue by reducing
his price to $\mu_{i}^{*}$. Similarly, if $h <I/(I-1)$, there exists
at least one operator with price $\lambda_i$ less than both $h_{-i}$
and $I/(I-1)$. This operator can increase his revenue by increasing
his price. If all the operators set their prices to
$\lambda_i=I/(I-1)$, the price vector is in $\Lambda_B$ and none of
the operators can increase his revenue by unilaterally changing his
price. Therefore, the only NE is
$\mathbf{\lambda}^{*}=(\lambda_{i}^{*}=I/(I-1):\,i\in \mathcal{I})$.

Hence the lemma is proved.
\end{proof}

\begin{lem}
If $\alpha \in A_2=[e/I,e^{I/(I-1)}]$, a NE can only be in
$\Lambda_C$. \label{lem:lemma7}
\end{lem}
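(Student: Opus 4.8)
The plan is to argue by \emph{elimination}. Since every price profile lies in exactly one of the mutually exclusive sets $\Lambda_A$, $\Lambda_B$, $\Lambda_C$ (they partition the strategy space according to whether $\sum_{j}\alpha e^{-\lambda_j}$ is less than, greater than, or equal to $1$), it suffices to rule out the existence of a Nash equilibrium in $\Lambda_A$ and in $\Lambda_B$ for $\alpha\in A_2$. Whatever survives must then necessarily lie in $\Lambda_C$, which is exactly the assertion.

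First I would dispose of $\Lambda_A$, reusing the opening argument of Lemma~\ref{lem:lemma6}. By Theorem~\ref{thm:theorem1} the only admissible best response inside $\Lambda_A$ is $\lambda_i^{*}=1$, so any equilibrium in $\Lambda_A$ would have to be the symmetric vector $\mathbf{\lambda}=(1,\ldots,1)$. But this vector lies in $\Lambda_A$ only if $\sum_j\alpha e^{-1}=I\alpha/e<1$, i.e. $\alpha<e/I$. Since $\alpha\in A_2$ implies $\alpha\geq e/I$, the all-ones vector is not in $\Lambda_A$, and then by Lemma~\ref{lem:lemma1} no operator has a best response that keeps the profile in $\Lambda_A$. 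Hence there is no equilibrium in $\Lambda_A$.

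Next, and this is the crux, I would rule out $\Lambda_B$. By Theorem~\ref{thm:theorem1}, at any equilibrium lying in $\Lambda_B$ each operator must play its unconstrained optimum $\mu_i^{*}$, and by Lemma~\ref{lem:lemma5} each $\mu_i^{*}$ is bracketed between $h_{-i}$ and $I/(I-1)$. Reusing the harmonic-mean deviation argument from the last part of Lemma~\ref{lem:lemma6}, if the prices are not all equal to $I/(I-1)$ then either $h\geq I/(I-1)$, forcing some operator to have $\lambda_i>\max\{h_{-i},I/(I-1)\}\geq\mu_i^{*}$ and hence a profitable downward deviation to $\mu_i^{*}$, or $h<I/(I-1)$, forcing some operator below both thresholds and hence a profitable upward deviation. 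Consequently the only candidate equilibrium in $\Lambda_B$ is the symmetric profile $\lambda_i=I/(I-1)$ for all $i$. The key observation is then a feasibility check: this profile gives $\sum_j\alpha e^{-\lambda_j}=I\alpha e^{-I/(I-1)}$, which exceeds $1$ (the defining condition of $\Lambda_B$) precisely when $\alpha>e^{I/(I-1)}/I$. For $\alpha\in A_2$ we have $\alpha\leq e^{I/(I-1)}/I$, so the candidate profile does not lie in $\Lambda_B$; at the upper endpoint it satisfies $\sum_j\alpha e^{-\lambda_j}=1$ and therefore lies on the boundary, i.e. in $\Lambda_C$. Thus no equilibrium exists in $\Lambda_B$.

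Combining the two exclusions, any Nash equilibrium of $\mathcal{G_P}$ with $\alpha\in A_2$ must lie in $\Lambda_C$. The main obstacle I anticipate is not the feasibility arithmetic but the step that pins down the \emph{unique} candidate inside $\Lambda_B$: one must verify that the ``not-all-equal'' deviation argument of Lemma~\ref{lem:lemma6} is valid for every $\alpha$ (it relies only on the profile sitting in $\Lambda_B$ and on the bracketing of $\mu_i^{*}$ by $h_{-i}$ and $I/(I-1)$, both $\alpha$-independent facts), and that the boundary value $\alpha=e^{I/(I-1)}/I$ is treated consistently by assigning the equality $\sum_j\alpha e^{-\lambda_j}=1$ to $\Lambda_C$ rather than to $\Lambda_B$.
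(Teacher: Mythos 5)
Your proposal is correct and follows the paper's overall elimination structure (rule out $\Lambda_A$, rule out $\Lambda_B$, conclude $\Lambda_C$), using the same key ingredients: Theorem~\ref{thm:theorem1}, Lemma~\ref{lem:lemma5}, and the arguments of Lemma~\ref{lem:lemma6}. The $\Lambda_A$ exclusion is identical to the paper's (which simply cites the first part of Lemma~\ref{lem:lemma6}'s proof). Where you diverge is the $\Lambda_B$ exclusion. The paper argues directly: for any profile in $\Lambda_B$ with $\alpha\in A_2$, the strict inequality $\sum_j\alpha e^{-\lambda_j}>1$ forces $h<\log(I\alpha)\leq I/(I-1)$, so some operator sits at a price no larger than $h_{-i}$ and strictly below $I/(I-1)$, hence strictly below $\mu_i^{*}$ by the bracketing of Lemma~\ref{lem:lemma5}, and can profitably raise his price --- one step, done. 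You instead split the work: an $\alpha$-independent argument (recycled from Lemma~\ref{lem:lemma6}) showing the only possible equilibrium in $\Lambda_B$ is the symmetric profile $\lambda_i=I/(I-1)$, followed by a feasibility check that $I\alpha e^{-I/(I-1)}>1$ fails precisely when $\alpha\leq e^{I/(I-1)}/I$. Your version is slightly longer but buys two things: the deviation argument is cleanly decoupled from $\alpha$, and the boundary value $\alpha=e^{I/(I-1)}/I$ is handled explicitly (the symmetric candidate lands exactly on $\Lambda_C$), whereas the paper's proof states its hypothesis with a strict inequality $\alpha<e^{I/(I-1)}/I$ and leaves the endpoint implicit, covered only because membership in $\Lambda_B$ makes $h<\log(I\alpha)$ strict. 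One caveat you inherit from Lemma~\ref{lem:lemma6}: in the fully symmetric sub-case $\lambda_i=h_{-i}=c>I/(I-1)$ for all $i$, the deviator's price equals $h_{-i}$ rather than strictly exceeding it, so the strict comparison $\lambda_i>\mu_i^{*}$ must be recovered from equation (\ref{eq:optimal-lambdaB-2}) rather than from the bracketing alone; this imprecision is present in the paper's own proof as well and does not invalidate either argument.
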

\begin{proof}
In the proof of the Lemma \ref{lem:lemma6}, we showed that there is
no NE in $\Lambda_A$, if $\alpha\in\ A_2$. We can also prove that
there is no NE in $\Lambda_B$ for the given range of $\alpha$
values. If the price vector is in $\Lambda_B$ and
$\alpha<\frac{e^{I/(I-1)}}{I}$, then $h<I/(I-1)$. Therefore, there
is at least one operator with price $\lambda_i \leq h_{-i}$ and
$\lambda_i<I/(I-1)$, who can increase his revenue by increasing his
price. So we conclude that, if $\alpha \in A_2=[e/I,e^{I/(I-1)}]$,
there is no NE in $\Lambda_A$ or $\Lambda_B$.
\end{proof}

\begin{lem}
If $\alpha \in A_2=[e/I,e^{I/(I-1)}]$,
$\mathbf{\lambda}^{*}\in\Lambda_C$ with
$\mathbf{\lambda}^{*}=(\lambda_{i}^{*}=\log(I\alpha):\,i\in
\mathcal{I})$ is a NE. \label{lem:lemma8}
\end{lem}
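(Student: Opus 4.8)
The plan is to verify feasibility of the candidate point and then show it is self-confirming under the best-response map of Theorem \ref{thm:theorem1}. First I would confirm that $\mathbf{\lambda}^{*}=(\log(I\alpha),\ldots,\log(I\alpha))$ indeed lies in $\Lambda_C$: substituting $\lambda_i^{*}=\log(I\alpha)$ and $\alpha_i=\alpha$ gives $\sum_{i=1}^{I}\alpha e^{-\log(I\alpha)}=\sum_{i=1}^{I}\tfrac{1}{I}=1$, which is exactly the defining condition of $\Lambda_C$. Exploiting the full symmetry, for any operator $i$ the remaining $I-1$ prices are all equal to $\log(I\alpha)$, so the harmonic-mean quantity satisfies $h_{-i}=\log(I\alpha)$ and $\sum_{j\neq i}\alpha e^{-\lambda_j}=(I-1)/I$; consequently $l_0=\log\big(\alpha/(1-(I-1)/I)\big)=\log(I\alpha)=\lambda_i^{*}$.

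With these identities in hand, the core of the argument is to apply Theorem \ref{thm:theorem1} and show that, against $\mathbf{\lambda}_{-i}^{*}$, neither of the first two branches of the best-response map is active, so that operator $i$'s best response defaults to the third branch $\lambda_i^{C}=l_0=\log(I\alpha)=\lambda_i^{*}$. Establishing this for every $i$ proves that no operator can profitably deviate, i.e. $\mathbf{\lambda}^{*}$ is a NE.

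To rule out the first branch I would check that $(1,\mathbf{\lambda}_{-i}^{*})\notin\Lambda_A$. Indeed, $\alpha e^{-1}+\sum_{j\neq i}\alpha e^{-\lambda_j}=\alpha/e+(I-1)/I$, and since $\alpha\geq e/I$ on $A_2$ this is $\geq 1/I+(I-1)/I=1$; as $\Lambda_A$ is defined by the strict inequality $<1$, the point is excluded. To rule out the second branch I would use Lemma \ref{lem:lemma5}: since $h_{-i}=\log(I\alpha)\leq I/(I-1)$ for $\alpha\in A_2$ (this upper bound is precisely the upper endpoint of $A_2$), the unconstrained $\Lambda_B$ optimizer obeys $\mu_i^{*}\geq h_{-i}=\log(I\alpha)$, whence $\alpha e^{-\mu_i^{*}}\leq 1/I$ and $\alpha e^{-\mu_i^{*}}+(I-1)/I\leq 1$; since $\Lambda_B$ requires the strict inequality $>1$, we get $(\mu_i^{*},\mathbf{\lambda}_{-i}^{*})\notin\Lambda_B$. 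Both non-default branches being infeasible, Theorem \ref{thm:theorem1} yields $\lambda_i^{*}=l_0=\log(I\alpha)$, confirming the equilibrium.

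The main obstacle is not the computation itself but correctly handling the two endpoints of the closed interval $A_2$, where the above weak inequalities become equalities $\sum_{j}\alpha_j e^{-\lambda_j}=1$. There the argument hinges precisely on the open conventions for $\Lambda_A$ (strict $<1$) and $\Lambda_B$ (strict $>1$): because the deviated price vectors land exactly on the boundary $\sum_{j}\alpha_j e^{-\lambda_j}=1$ rather than strictly inside $\Lambda_A$ or $\Lambda_B$, both alternative branches remain excluded and the best response stays on the $\Lambda_C$ boundary. I would make this boundary bookkeeping explicit, in particular checking $\alpha=e/I$ (where $\log(I\alpha)=1$) and $\alpha=e^{I/(I-1)}/I$ (where $\log(I\alpha)=I/(I-1)$ and $\mu_i^{*}=I/(I-1)$), to guarantee the claim holds uniformly across the whole interval $A_2$.
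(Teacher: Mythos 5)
Your proposal is correct and follows essentially the same route as the paper: verify the symmetric point lies in $\Lambda_C$, use the best-response characterization (Theorem \ref{thm:theorem1}) to show the $\Lambda_A$ branch is excluded because $(1,\mathbf{\lambda}_{-i}^{*})\notin\Lambda_A$ and the $\Lambda_B$ branch is excluded because $(\mu_i^{*},\mathbf{\lambda}_{-i}^{*})\notin\Lambda_B$, and conclude that each operator's best response is $l_0=\log(I\alpha)=\lambda_i^{*}$. The only difference is one of explicitness: where the paper simply asserts that $\log(I\alpha)$ lies between $1$ and $\mu_i^{*}$ ``as can be verified through equation (\ref{eq:optimal-lambdaB})'', you carry out that verification via Lemma \ref{lem:lemma5} and handle the endpoints of $A_2$ explicitly, which is a welcome tightening but not a different argument.
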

\begin{proof}
When all the operators set the same price
$\lambda_{i}^{*}=\log(I\alpha)$ and $\alpha \in
A_2=[e/I,e^{I/(I-1)}]$, $\log(I\alpha)$ is between 1 and
$\mu_{i}^{*}$ for all operators (this can be verified through
equation (\ref{eq:optimal-lambdaB})). Therefore, for any operator i,
$(1,\lambda_{-i})\notin\Lambda_A$ and
$(\mu_{i}^{*},\lambda_{-i})\notin\Lambda_B$. Then, according to
Theorem \ref{thm:theorem1}, best response price is $\lambda_i^C$
which is equal to $\log(I\alpha)$. Hence, no operator can gain more
revenue by unilaterally changing his price, and
$\mathbf{\lambda}^{*}=(\lambda_{i}^{*}=\log(I\alpha):\,i\in
\mathcal{I})$ is a NE.
\end{proof}
\ \par Finally we analyze the NE for boundary values of $A_2$, i.e.
for $\alpha=e/I$ and $\alpha=e^{I/(I-1)}/I$. In the previous lemma,
it is proven that
$\mathbf{\lambda}^{*}=(\lambda_{i}^{*}=\log(I\alpha):\,i\in
\mathcal{I})$ is a NE if $\alpha\in A_2$. We can also prove that it
is the only NE for these boundary values. In Lemma \ref{lem:lemma7},
it is proven that any NE is in $\Lambda_C$ for these $\alpha$
values. So, when $\alpha=e/I$, $h=\log(I\alpha)=1$, which means that
unless all of the users set their prices to one, there exists some
operators with $\lambda_i<1$. These operators would gain more
revenue by setting their prices to one. Hence, the only NE is
$\lambda_i^{*}=\log(I\alpha)=1$. Similarly, when
$\alpha=e^{I/(I-1)}/I$, $h=\log(I\alpha)=I/(I-1)$. This means that
unless  all of the users set their prices to $I/(I-1)$, there exists
some operators with $\lambda_i$ greater than both $h_{-i}$ and
$M/(M-1)$. These operators would gain more revenue by reducing their
prices. Hence, the only NE is $\lambda_i^{*}=\log(I\alpha)=I/(I-1)$.

We also show that when $\alpha \in (e/I,e^{I/(I-1)})$, there can be
infinitely many NEs, all in $\Lambda_C$, via numerical simulations.
For different initial price settings, the game converges to
different NE.

\begin{thm}
The game $\mathcal{G_P}$ attains a pure NE which depends on the
value of parameter $\alpha$ as follows:
\begin{itemize}
\item If $\alpha\in A_1=(0,e/I)$, there is a unique NE
$\mathbf{\lambda}^{*}\in\Lambda_A$, with
$\mathbf{\lambda}^{*}=(\lambda_{i}^{*}=1:\,i\in \mathcal{I})$ and
respective market equilibrium $\mathbf{x}^{*}\in X_A$.
\item If $\alpha\in A_3=(e^{\frac{I}{I-1}}/I,\infty)$, there is a unique NE $\mathbf{\lambda}^{*}\in\Lambda_B$, with
$\mathbf{\lambda}^{*}=(\lambda_{i}^{*}=\frac{I}{I-1}:\,i\in
\mathcal{I})$, which induces a respective market equilibrium
$\mathbf{x}^{*}\in X_B$.
\item If $\alpha\in A_2=[e/I, e^{\frac{I}{I-1}}/I]$, there exist infinitely many
NEs, $\mathbf{\lambda}^{*}\in\Lambda_C$, and each one of them yields
a respective market stationary point $\mathbf{x}^{*}\in X_C$.
\end{itemize}
\end{thm}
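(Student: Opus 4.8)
The plan is to treat this final statement as the synthesis of the best-response analysis already established, organized according to where the quantity $\log(I\alpha)$ sits relative to the two candidate equilibrium prices $1$ and $I/(I-1)$. Existence of a pure NE, together with convergence of any myopic best-response dynamics, is not in question: by Lemma \ref{lem:lemma01} the game $\mathcal{G_P}$ is an ordinal potential game, so the result of \cite{shapley} gives both existence and finite-improvement-path convergence. What remains is to \emph{locate} the equilibria, and for that I would work entirely through the best-response map of Theorem \ref{thm:theorem1}, which says each operator plays $1$ when $(1,\mathbf{\lambda}_{-i})\in\Lambda_A$, plays $\mu_i^*$ when $(\mu_i^*,\mathbf{\lambda}_{-i})\in\Lambda_B$, and is otherwise pinned to the boundary price $l_0=\lambda_i^C$. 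The crucial device is the log-harmonic-mean coordinate $h=\log H(\{e^{\lambda_j}\})$ from Section \ref{appendixproof:DetailedAnalysisOfEquil}, since membership of $\mathbf{\lambda}$ in $\Lambda_A$, $\Lambda_B$, $\Lambda_C$ is exactly the comparison of $h$ with $\log(I\alpha)$. Observe that the two breakpoints defining $A_1,A_2,A_3$ are precisely the values of $\alpha$ at which $\log(I\alpha)$ equals $1$ and $I/(I-1)$, which is why these two prices govern the whole classification.

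With this setup each of the three cases reduces to one of the lemmas. For $\alpha\in A_1$ I would cite Lemma \ref{lem:lemma4}: when the profile leaves $\Lambda_A$ one has $h\le\log(I\alpha)<1$, forcing some operator below price $1$ on the increasing part of $R_i^B$, who then strictly gains by raising his price; hence no NE lives in $\Lambda_B\cup\Lambda_C$, and inside $\Lambda_A$ the only admissible best response is the symmetric $\lambda_i^*=1$, feasible precisely because $(1,\ldots,1)\in\Lambda_A\iff\alpha<e/I$. For $\alpha\in A_3$ I would cite Lemma \ref{lem:lemma6}, whose argument rules out $\Lambda_A$ (since $(1,\ldots,1)\notin\Lambda_A$ here) and $\Lambda_C$ (some operator has $\lambda_i^C\ge\max\{h_{-i},I/(I-1)\}>\mu_i^*$ by Lemma \ref{lem:lemma5}, so can deviate profitably into $\Lambda_B$), and then shows the symmetric $\lambda_i^*=I/(I-1)$ is the unique fixed point of the best-response map in $\Lambda_B$. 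For $\alpha\in A_2$, Lemma \ref{lem:lemma7} eliminates $\Lambda_A$ and $\Lambda_B$, confining any NE to $\Lambda_C$, while Lemma \ref{lem:lemma8} verifies that the symmetric profile $\lambda_i^*=\log(I\alpha)$ is a NE, and the boundary cases $\alpha=e/I$ and $\alpha=e^{I/(I-1)}/I$ collapse to the unique symmetric equilibria $\lambda_i^*=1$ and $\lambda_i^*=I/(I-1)$ respectively. In every case the induced market stationary point follows immediately by substituting the equilibrium prices into the sets $X_A$, $X_B$, $X_C$ of Section \ref{appendixproof:AnalysisOfStationaryPoints}.

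The genuinely delicate step is the multiplicity claim for the interior of $A_2$, namely that the whole face $\Lambda_C=\{\sum_j\alpha e^{-\lambda_j}=1\}$ is populated by equilibria rather than just the symmetric point. Establishing this analytically means showing that along an arbitrary asymmetric profile satisfying $h=\log(I\alpha)$ the two exclusion conditions $(1,\mathbf{\lambda}_{-i})\notin\Lambda_A$ and $(\mu_i^*,\mathbf{\lambda}_{-i})\notin\Lambda_B$ hold \emph{simultaneously for every} operator $i$, so that Theorem \ref{thm:theorem1} forces each operator back onto $\lambda_i^C$ and no unilateral deviation pays. Verifying this jointly for all $i$ over the continuum of admissible asymmetric price splits is the real work, and it is exactly here that the paper stops short of a closed-form characterization and appeals to numerical experiments. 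I would regard closing this gap---either proving that the entire relative interior of $\Lambda_C$ consists of NE, or delimiting the precise sub-face that does---as the main obstacle; the three symmetric equilibria and the uniqueness claims on $A_1$, $A_3$ and the boundary of $A_2$ are comparatively routine once Theorem \ref{thm:theorem1} and the coordinate $h$ are in hand.
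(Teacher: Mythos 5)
Your proposal follows essentially the same route as the paper: existence and convergence via the ordinal potential function (Lemma \ref{lem:lemma01}), then the case analysis delegated to Lemma \ref{lem:lemma4} for $A_1$, Lemma \ref{lem:lemma6} for $A_3$, and Lemmas \ref{lem:lemma7} and \ref{lem:lemma8} together with the boundary discussion for $A_2$, exactly as organized in Section \ref{appendixproof:DetailedAnalysisOfEquil}. Your remark that the ``infinitely many equilibria'' claim on the interior of $A_2$ rests on numerical evidence rather than an analytical characterization is also accurate --- the paper itself concedes precisely this and does not close that gap.
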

\begin{proof}
Lemma \ref{lem:lemma01} proves that the game $\mathcal{G_P}$ is a
finite ordinal potential game. Therefore, it always attains a pure
NE. Lemma \ref{lem:lemma4} proves the case for $\alpha\in
A_1=(0,e/I)$. Lemma \ref{lem:lemma6} proves the case for $\alpha\in
A_3=(e^{\frac{I}{I-1}}/I,\infty)$. The case for $\alpha\in A_2=[e/I,
e^{\frac{I}{I-1}}/I]$ is proven in Lemma \ref{lem:lemma7} and Lemma
\ref{lem:lemma8}.
\end{proof}



\begin{thebibliography}{1}


\bibitem{acemoglou} D. Acemoglou, and A. Ozdaglar, ``Price Competition in Communication Networks'', \emph{IEEE Infocom,} 2006.


\bibitem{walrand} L. He, and J. C. Walrand, ``Pricing and revenue sharing strategies for Internet service providers'', \emph{IEEE JSAC,} 2006.

\bibitem{sandholm1} W. H. Sandholm, ``Pairwise Comparison Dynamics and Evolutionary Foundations for Nash Equilibrium'', \emph{Games,} vol. 1, pp. 3-17, 2010.


\bibitem{berry} T. Nguyen, H. Zhou, R. Berry, M. Honig and R. Vohra, ``The Impact of Additional Unlicensed Spectrum on Wireless Services Competition'', \emph{IEEE Dyspan,} 2011.
    
    
\bibitem{lin} L. Gao, X. Wang, and Y. Xu, ``Multiradio Channel Allocation in Multihop Wireless Networks'', 
\emph{IEEE Transactions on Mobile Computing,} 2009.
   

\bibitem{courcoubetis} C. Courcoubetis, and R. R. Weber, ``Pricing Communication Networks: Economics, Technology and Modelling'', \emph{Willey,} 2003.
    
\bibitem{srikant_economics} S. Shakkottai, and R. Srikant, ``Economics of Network Pricing With Multiple ISPs'', \emph{IEEE/ACM Transactions on Networking,} vol. 14, no. 6, 2006.


\bibitem{huang-icc} K. Leung, and J. Huang, ``Regulating Wireless Access Pricing'', \emph{IEEE ICC,} 2011.


\bibitem{niyato-churn} D. Niyato, and E. Hossain, ``Modeling User Churning Behavior in Wireless Networks Using Evolutionary Game Theory'', 
\emph{IEEE WCNC,} 2008.


\bibitem{taylor} P. D. Taylor, and L. Jonker, ``Evolutionarily Stable Strategies and Game Dynamics'', 
\emph{Mathematical Biosciences,} vol. 40, pp. 145-156, 1978.


\bibitem{park} J. Park, and M. van der Schaar, ``The Theory of Intervention Games for Resource Sharing in Wireless Communications'', \emph{IEEE JSAC,} vol. 30, pp. 165-175, 2011.


\bibitem{shapley} D. Monderer, and L. S. Shapley, ``Potential Games'', \emph{Games and Economic Behavior,} vol. 14, 1996.


\bibitem{huang1} G. Vojislav, J. Huang, and B. Rimoldi, ``Competition of Wireless Providers for Atomic Users: Equilibrium and Social Optimality'', \emph{Allerton Conference,} 2009.


\bibitem{niyato1} D. Niyato, E. Hossain, and Z. Han, ``Dynamics of Multiple-Seller and Multiple-Buyer Spectrum Trading in
Cognitive Radio Networks: A Game-Theoretic Modeling Approach'', \emph{IEEE Transactions on Mobile Computing,} vol. 8, no. 8, 2009.


\bibitem{niyato2} D. Niyato, and E. Hossain, ``Dynamics of Network Selection inH eterogeneous Wireless Networks:
   An Evolutionary Game Approach'', \emph{IEEE Transactions on Vehicular Technology,} 2009.



\bibitem{bianchi} G. Bianchi, ``Performance Analysis of the IEEE 802.11 Distributed
Coordination Function'', \emph{IEEE Journal on Selected Areas in Communication,} vol. 18, no. 3, 2000.

   
\bibitem{cagalj} M. Cagalj, S. Ganeriwal, and I. Aad, ``On Selfish Behavior in CSMA/CA Networks'', 
\emph{IEEE Infocom ,} 2005.


\bibitem{XuChen} X. Chen, and J. Huang, ``Evolutionarily Stable Spectrum Access in a Many-Users Regime'', 
\emph{IEEE Globecom,} 2011.

\bibitem{iosifidis-report} O. Korcak, G. Iosifidis, T. Alpcan, and I. Koutsopoulos, ``Competition and Regulation in Wireless Services Markets'', \emph{Technical Report, Arxiv, http://arxiv.org/abs/1112.2437,} 2012.


\bibitem{wardrop} J. G. Wardrop, ``Some theoretical aspects of road traffic research'', 
\emph{Proceedings of the Institution of Civil Engineers,} 1952.


\bibitem{tuffin} P. Maille, and B. Tuffin, ``Price war in heterogeneous wireless networks'', \emph{Computer Networks,} vol. 54, pp. 2281-2292, 2010.

\bibitem{levin-learning} D. Fudenberg, and D. K. Levine, ``Learning in Games'', 
\emph{MIT Press,} 1998.

\bibitem{smith} M. J. Smith, ``The Stability of a Dynamic Model of Traffic Assignment — An Application of a Method of Lyapunov'', 
\emph{Transactions of Science,} vol. 18, pp. 245-252, 1984.


\bibitem{smith-book} M. J. Smith, ``Evolution and the Theory of Games'', 
\emph{Cambridge University Press,} 1982.


\bibitem{boyd} S. Boyd, and L. Vandenberghe, ``Convex Optimization'', 
\emph{Cambridge University Press,} 2004.

\end{thebibliography}
\end{document}